\DeclareMathSymbol{\ast}{\mathbin}{symbols}{"03}
\newtheorem{theorem}{Theorem}
\newtheorem{proposition}{Proposition}
\newtheorem{lemma}{Lemma}
\newtheorem{definition}{Definition}
\newtheorem{example}{Example}
\newtheorem{remark}{Remark}
\newenvironment{theorembis}[1]
  {%
   \addtocounter{theorem}{-1}%
   \begin{theorem}}
  {\end{theorem}}
\newcommand{\defn}{\mathrel{\stackrel{\Delta}{=}}}
\def\bU{\mathbf{U}}
\def\bx{\mathbf{x}}
\def\bX{\mathbf{X}}
\def\by{\mathbf{y}}
\def\bY{\mathbf{Y}}
\def\bS{\mathbf{S}}
\def\bT{\mathbf{T}}
\def\bA{\mathbf{A}}
\def\bB{\mathbf{B}}
\def\bG{\mathbf{G}}
\def\bF{\mathbf{F}}
\def\bFo{\mathbf{F}_{\text{out}}}
\def\bFi{\mathbf{F}_{\text{in}}}
\def\cA{\mathcal{A}}
\def\cB{\mathcal{B}}
\def\cX{\mathcal{X}}
\def\cY{\mathcal{Y}}
\def\cM{\mathcal{M}}
\def\cH{\mathcal{H}}
\DeclareMathOperator{\var}{Var}
\DeclareMathOperator{\varent}{V}
\DeclareMathOperator{\cov}{Cov}
\def\Hm{g_1}
\def\Hp{g_2}
\def\hip{h_{\text{in},2}}
\def\him{h_{\text{in},1}}
\def\hp{h_{\text{out},2}}
\def\hm{h_{\text{out},1}}
\def\hoi{h_{\text{out},i}}
\def\Aim{A_{\text{in},1}}
\def\Aip{A_{\text{in},2}}
\def\cAm{\overline{A}_{\text{out},1}}
\def\cAim{\overline{A}_{\text{in},1}}
\def\cAip{\overline{A}_{\text{in},2}}
\def\Am{A_{\text{out},1}}
\def\Ap{A_{\text{out},2}}
\def\Bim{B_{\text{in},1}}
\def\Bip{B_{\text{in},2}}
\def\Bm{B_{\text{out},1}}
\def\Bp{B_{\text{out},2}}
\def\Aii{A_{\text{in},i}}
\def\Bii{B_{\text{in},i}}
\def\Aoi{A_{\text{out},i}}
\def\Boi{B_{\text{out},i}}
\def\cBim{\overline{B}_{\text{in},1}}
\def\aim{a_{\text{in},1}}
\def\aip{a_{\text{in},2}}
\def\am{a_{\text{out},1}}
\def\aii{a_{\text{in},i}}
\def\bii{b_{\text{in},i}}
\def\Fim{F_{\text{in},1}}
\def\Fom{F_{\text{out},1}}
\def\Fip{F_{\text{in},2}}
\def\Fop{F_{\text{out},2}}
\def\R{\mathbb{R}}
\def\E{\mathbb{E}}
\newcommand*\diff{\mathop{}\!\mathrm{d}}
\begin{document}

\title{Varentropy Decreases Under the Polar Transform}
\author{Erdal Ar{\i}kan
\thanks{The author is with the Department of Electrical-Electronics Engineering,  
Bilkent University, 06800, Ankara, Turkey. Email: arikan@ee.bilkent.edu.tr}}

\maketitle

\begin{abstract}
We consider the evolution of variance of entropy (varentropy) in the course of a polar transform operation on binary data elements (BDEs).
A BDE is a pair $(X,Y)$ consisting of a binary random variable $X$ and an arbitrary side information random variable $Y$. 
The varentropy of $(X,Y)$ is defined as the variance of the random variable $-\log p_{X|Y}(X|Y)$.
A polar transform of order two is a certain mapping that takes two independent BDEs and produces two new BDEs that are correlated with each other.
It is shown that the sum of the varentropies at the output of the polar transform is less than or equal to the sum of the varentropies at the 
input, with equality if and only if at least one of the inputs has zero varentropy.
This result is extended to polar transforms of higher orders and it is shown that the varentropy decreases to zero asymptotically when the BDEs at the input are independent
and identically distributed.
\end{abstract}

\begin{IEEEkeywords}
Polar coding, varentropy, dispersion.
\end{IEEEkeywords}

\section{Introduction}\label{Sect:Introduction}

We use the term ``varentropy'' as an abbreviation for ``variance of the conditional entropy random variable'' following the usage in \cite{KV}.
In his pioneering work, Strassen \cite{Strassen} showed that the varentropy is a key parameter for estimating the performance of optimal block-coding schemes 
at finite (non-asymptotic) block-lengths.
More recently, the comprehensive work by Polyanskiy, Poor and Verd\' u \cite{PPV} further elucidated the significance of varentropy (under the name ``dispersion'') and rekindled interest in the subject.
In this paper, we study varentropy in the context of polar coding.
Specifically, we track the evolution of average varentropy in the course of polar transformation of independent identically distributed (i.i.d.) BDEs 
and show that it decreases to zero asymptotically as the transform size increases.
As a side result, we obtain an alternative derivation of the polarization results of \cite{ArikanIT2009}, \cite{ArikanSource2010}.

\subsection{Notation and basic definitions}
Our setting will be that of binary-input memoryless channels and binary memoryless sources.
We treat source and channel coding problems in a common framework by using the neutral term ``binary data element'' (BDE)
to cover both.
Formally, a BDE is any pair of random variables $(X,Y)$ where $X$ takes values over $\cX\defn \{0,1\}$ (not necessarily from the uniform distribution) 
and $Y$ takes values over some alphabet $\cY$ which may be discrete or continuous.
A BDE $(X,Y)$ may represent, in a source-coding setting, a binary data source $X$ that we wish to compress in the presence of some side information $Y$; 
or, it may represent, in a channel-coding setting, a channel with input $X$ and output $Y$.

Given a BDE $(X,Y)$, the information measures of interest in the sequel will be the {\sl conditional entropy random variable} 
$$
h(X|Y) \defn -\log p_{X|Y}(X|Y),
$$
the {\sl conditional entropy}
$$
H(X|Y) \defn \E \,h(X|Y),
$$
and, the {\sl varentropy}
$$
\varent(X|Y) \defn \var (h(X|Y)).
$$
Throughout the paper, we use base-two logarithms.

The term {\sl polar transform} is used in this paper to to refer to an operation that takes two {\sl independent} BDEs $(X_1,Y_1)$ and $(X_2,Y_2)$ as input, and produces
two new BDEs $(U_1,\bY)$ and $(U_2;U_1,\bY)$ as output, where $U_1 \defn X_1 \oplus X_2$, $U_2 \defn X_2$, and $\bY \defn (Y_1,Y_2)$.
The notation ``$\oplus$'' denotes modulo-2 addition.

\subsection{Polar transform and varentropy}\label{sec:PolarTransform}

The main result of the paper is the following. 
\begin{theorem}\label{theoremvarentropy}
The varentropy is nonincreasing under the polar transform in the sense that,
if $(X_1,Y_2)$, $(X_2,Y_2)$ are any two independent BDEs at the input of the transform and $(U_1,\bY)$, $(U_2;U_1,\bY)$ are the BDEs at its output, then
\begin{align}\label{varentropycontraction0}
\varent(U_1|\bY) + \varent(U_2|U_1,\bY) \le \varent(X_1|Y_1) + \varent(X_2|Y_2),
\end{align}
with equality if and only if (iff) either $V(X_1|Y_1)=0$ or $V(X_2|Y_2)=0$.
\end{theorem}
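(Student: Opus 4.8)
The plan is to collapse the whole statement into a single covariance-sign inequality and then to prove that sign by conditioning. First I would record the pointwise chain rule for the conditional-entropy random variables: since $(X_1,X_2)\mapsto(U_1,U_2)$ is a bijection for each fixed value of $\bY$ and the two input BDEs are independent,
$$h(U_1|\bY)+h(U_2|U_1,\bY)=h(U_1,U_2|\bY)=h(X_1,X_2|\bY)=h(X_1|Y_1)+h(X_2|Y_2)$$
as an almost-sure identity. Writing $A=h(X_1|Y_1)$, $B=h(X_2|Y_2)$, $C=h(U_1|\bY)$, $D=h(U_2|U_1,\bY)$, this reads $C+D=A+B$. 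As $A$ is a function of $(X_1,Y_1)$ and $B$ of $(X_2,Y_2)$, input independence gives $\cov(A,B)=0$, so taking the variance of $C+D=A+B$ yields $\var(A)+\var(B)=\var(C)+\var(D)+2\cov(C,D)$. Hence \eqref{varentropycontraction0} is \emph{equivalent} to $\cov(C,D)\ge 0$, with equality iff $\cov(C,D)=0$. I expect this reduction to be routine.

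The crux, and the main obstacle, is $\cov(C,D)\ge 0$. Because $C$ is $(U_1,\bY)$-measurable, the tower property gives $\cov(C,D)=\cov\big(C,\,H(U_2|U_1,\bY)\big)$, so both arguments are now functions of $(U_1,\bY)$. By the law of total covariance I would split this into a within-$\bY$ part and a between-$\bY$ part. For the within part, fix $\by$ and write $p=p_{X_1|Y_1}(1|y_1)$, $q=p_{X_2|Y_2}(1|y_2)$ and $r_u=P(U_1{=}u|\by)$; since $U_1$ is binary the conditional covariance factorizes as $r_0r_1\big(\log\tfrac{r_1}{r_0}\big)\big(H(U_2|U_1{=}0,\by)-H(U_2|U_1{=}1,\by)\big)$. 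The two factors share a sign: the conditional split probabilities satisfy $\alpha-\tfrac12=\tfrac{1-p-q}{2r_0}$ and $\beta-\tfrac12=\tfrac{p-q}{2r_1}$, and with $a=|1-p-q|$, $b=|p-q|$ one has $r_1-r_0=(p-q)^2-(1-p-q)^2$ and $b\,r_0-a\,r_1=\tfrac12(b-a)\big(1-(a+b)^2\big)$, where $a+b\le 1$. Thus both $\log\tfrac{r_1}{r_0}$ and $H(U_2|U_1{=}0,\by)-H(U_2|U_1{=}1,\by)$ carry the sign of $b-a$, so their product is nonnegative and the within part is nonnegative termwise.

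What remains is the between part, $\cov_{\bY}\big(H(U_1|\bY),H(U_2|U_1,\bY)\big)$, the covariance of the two output conditional entropies viewed as functions of $\bY$. Structurally it mirrors the original problem, the two summands adding to $H(X_1|Y_1)+H(X_2|Y_2)$ whose parts are independent because $Y_1\perp Y_2$; but it is no longer a binary computation, and establishing its nonnegativity is where I expect the real work to lie. I would try either a direct convexity/association argument or a reduction exploiting the same $|p-q|$-versus-$|1-p-q|$ dichotomy. Finally, for the equality clause I would trace when $\cov(C,D)=0$: the termwise analysis forces, for almost every $\by$, either $p\in\{0,\tfrac12,1\}$ or $q\in\{0,\tfrac12,1\}$, and feeding this back through the between part excludes the ``mixed'' configurations, leaving exactly that $h(X_1|Y_1)$ or $h(X_2|Y_2)$ is almost surely constant, i.e. $\varent(X_1|Y_1)=0$ or $\varent(X_2|Y_2)=0$.
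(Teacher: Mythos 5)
Your reduction of the theorem to $\cov(C,D)\ge 0$, your split of that covariance by the law of total covariance, and your sign analysis of the within-$\bY$ part are all correct, and they reproduce the paper's architecture: the paper writes $\cov(\hm,\hp)=\cov_1+\cov_2$ with $\cov_1=\E_{\bY}\cov_{\bX|\bY}(\hm,\hp)$ and $\cov_2=\cov_{\bY}(\E_{\bX|\bY}\hm,\E_{\bX|\bY}\hp)$, and its Lemma~\ref{Cpq} is exactly your within-part claim. Your identities $r_1-r_0=(p-q)^2-(1-p-q)^2$ and $b\,r_0-a\,r_1=\tfrac12(b-a)\bigl(1-(a+b)^2\bigr)$ check out, and in fact give a somewhat slicker proof of that lemma than the paper's case analysis.

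The genuine gap is the between-$\bY$ part, which you explicitly leave open; it is not a routine remainder but roughly half of the proof, and the subtler half. The paper proves $\cov_2\ge 0$ (Proposition~\ref{proposition2}) as follows: passing to the $\beta$-parameters $B_i=\min\{A_i,1-A_i\}\in[0,1/2]$, one has $\E_{\bX|\bY}\hm=\cH(B_1\ast B_2)$ and $\E_{\bX|\bY}\hp=\cH(B_1)+\cH(B_2)-\cH(B_1\ast B_2)$, and both of these are coordinatewise nondecreasing functions on $[0,1/2]^2$ --- the first by elementary properties of $\cH$, the second via Gallager's convexity of mutual information in the channel transition probabilities. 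Since $B_1$ and $B_2$ are independent, a second covariance decomposition plus Chebyshev's correlation inequality (nondecreasing functions of a common scalar random variable are nonnegatively correlated) yields $\cov_2\ge 0$. Your hope for ``a direct convexity/association argument'' is the right instinct, but association requires monotonicity, and monotonicity holds only after folding $[0,1]$ to $[0,1/2]$ (binary entropy is not monotone on $[0,1]$); none of this machinery appears in your proposal. The equality clause is incomplete for the same reason: your within-part analysis (with independence) forces only that one input is \emph{erasing}, i.e.\ $B_i\in\{0,1/2\}$ a.s., which is strictly weaker than the claimed condition $V(X_i|Y_i)=0$ (extreme). To exclude the case of a strictly erasing input paired with a non-extreme one, the paper computes $\cov_2$ explicitly there, $\cov_2=\epsilon(1-\epsilon)\delta(1-\delta)>0$ (Lemma~\ref{lemma:BECcov2}); your sketch has no substitute for this step.
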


For an alternative formulation of the main result,
let us introduce the following notation:
\begin{align}\label{inputentropies}
\him \defn h(X_1|Y_1), \qquad \hip \defn h(X_2|Y_2),
\end{align}
\begin{align}\label{outputentropies}
\hm \defn h(U_1|\bY), \qquad \hp \defn h(U_2|U_1,\bY).
\end{align}

Theorem~\ref{theoremvarentropy} can be reformulated as follows.
\begin{theorembis}{theoremvarentropy}\label{theoremvarentropy2}
The polar transform of conditional entropy random variables, $(\him,\hip)\to (\hm,\hp)$, produces positively correlated output entropy terms
in the sense that
\begin{align}\label{covarianceinequality}
\cov(\hm,\hp)\ge 0,
\end{align}
with equality iff either $\var(\him)=0$ or $\var(\hip)=0$.
\end{theorembis}
This second form makes it clear that any reduction in varentropy can be attributed entirely to the creation of a positive correlation between the entropy random variables $\hm$ and $\hp$ at the output of the polar transform.

Showing the equivalence of the two claims \eqref{varentropycontraction0} and \eqref{covarianceinequality} is a simple exercise.
We have, by the chain rule of entropy, 
\begin{equation}\label{conservation1}
\hm + \hp = \him + \hip;
\end{equation}
hence,
$\var(\hm + \hp) = \var(\him + \hip)$.
Since $\him$ and $\hip$ are independent,
$\var(\him + \hip) = \var(\him) + \var(\hip)$;
while
$\var(\hm + \hp)  =\var(\hm) + \var(\hp)+ 2\cov(\hm,\hp)$.
Thus, the claim \eqref{varentropycontraction0}, which can be written in the equivalent form
\begin{align*}
\var(\hm) + \var(\hp) \le \var(\him) + \var(\hip),
\end{align*}
is true iff \eqref{covarianceinequality} holds.

A technical question that arises in the sequel is whether the varentropy is uniformly bounded across the class of all BDEs.
This is indeed the case.
\begin{lemma}\label{lemma:bound}
For any BDE $(X,Y)$, $V(X|Y)\le 2.2434$.
\end{lemma}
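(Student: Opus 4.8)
The plan is to give up the squared mean and reduce the claim to a one--dimensional maximization. Write $h \defn h(X|Y) = -\log p_{X|Y}(X|Y)$. The starting point is the trivial inequality
\begin{align*}
\varent(X|Y) = \var(h) = \E[h^2] - H(X|Y)^2 \le \E[h^2],
\end{align*}
which discards the nonnegative term $H(X|Y)^2$. This is deliberately wasteful, but it buys a decisive simplification: unlike the variance, the second moment factorizes over the conditioning variable. Conditioning on $Y$ and using $\cX=\{0,1\}$,
\begin{align*}
\E[h^2] = \E_Y\!\Big[ p_Y\log^2 p_Y + (1-p_Y)\log^2(1-p_Y)\Big] = \E_Y[\phi(p_Y)],
\end{align*}
where $p_Y \defn p_{X|Y}(1|Y)$ and $\phi(p) \defn p\log^2 p + (1-p)\log^2(1-p)$. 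Hence $\varent(X|Y)\le \sup_{0\le p\le 1}\phi(p)$, and it remains only to bound the explicit function $\phi$.

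For the bound on $\phi$ I would not maximize $\phi$ as a whole but split it into its two summands. Each has the form $t\log^2 t$ on $[0,1]$, and $\frac{d}{dt}\big(t\log^2 t\big) = \log t\,(\log t + 2\log e)$ vanishes in the interior only at $t = e^{-2}$, where $t\log^2 t = 4e^{-2}\log^2 e$; since the expression is $0$ at both endpoints, this is its maximum. Therefore
\begin{align*}
\phi(p) \le 2\max_{0\le t\le 1} t\log^2 t = 8\,e^{-2}\log^2 e \approx 2.25,
\end{align*}
uniformly in $p$, which matches the numerical constant quoted in the lemma.

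The single point needing care is this maximization, and it is precisely why I would split $\phi$ rather than differentiate it directly: $\phi$ is symmetric about $p=1/2$, with $\phi'(p) = \big(\log p - \log(1-p)\big)\big(\log p + \log(1-p) + 2\log e\big)$, so its stationary set is $\{1/2\}\cup\{p : p(1-p)=e^{-2}\}$, and the central point $p=1/2$ is a local minimum sitting between the two genuine maxima; a careless critical--point argument can therefore land on the wrong extremum. The term--by--term bound avoids this entirely, since $t\log^2 t$ has a single interior maximizer. Finally, I would note that the discarded term $H(X|Y)^2$ makes the resulting constant a genuine over--estimate of the supremal varentropy, but a uniform bound is all that is required here.
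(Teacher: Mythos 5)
Your proof is correct and follows essentially the same route as the paper's: bound the variance by the second moment, write that moment conditionally as $\E_Y[\phi(p_Y)]$ with $\phi(p)=p\log^2 p+(1-p)\log^2(1-p)$, and bound $\phi$ termwise by $2\max_{0\le t\le 1} t\log^2 t = 8e^{-2}\log^2 e$. The only differences are in your favor: you carry out the maximization of $t\log^2 t$ explicitly (the paper leaves it implicit), and your numerical value $\approx 2.25$ is actually the correct evaluation, since $8e^{-2}\log^2 e \approx 2.2535$, so the constant $2.2434$ stated in the lemma appears to be a slight arithmetic slip in the paper (harmless, as only the existence of a uniform bound is used later).
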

\begin{proof}
It suffices to show that the second moment of $h(X|Y)$ satisfies the given bound.
\begin{align*}
E[h(X|&Y)^2]  \le \max_{0\le x\le 1} [x\log^2(x)+(1-x)\log^2(1-x)]\\
& \le 2\max_{0\le x\le 1} [x\log^2(x)] = 8 e^{-2}\log^2(e) \approx 2.2434.
\end{align*}
(A numerical study shows that a more accurate bound on $V(X|Y)$ is $1.1716$, but the present bound will be sufficient for our purposes.) 
\end{proof}
This bound guarantees that all varentropy terms in this paper exist and are bounded;
it also guarantees the existence of the covariance terms since by the Cauchy-Schwarz inequality we have $|\cov(\hm,\hp)|\le \sqrt{\var(\hm)\var(\hp)}$.

We will end this part by giving two examples in order to illustrate the behavior of varentropy under the polar transform.
The terminology in both examples reflects a channel coding viewpoint; although, each model may also arise in a source coding context.
\begin{example}\label{examplebsc}
In this example, $(X,Y)$ models a binary symmetric channel (BSC) with equiprobable inputs and a crossover probability $0\le \epsilon\le 1/2$;
in other words, $X$ and $Y$ take values in the set $\{0,1\}$ with 
$$
p_{X,Y}(x,y) = \begin{cases} \frac12 (1-\epsilon), & \text{if $x=y$};\\
\frac12 \epsilon, & \text{if $x\neq y$}.
\end{cases}
$$
Fig.~\ref{figurebscvarentropy} gives a sketch of the varentropy and covariance terms defined above,
with $\var(h_{\text{in}})$ denoting the common value of $\var(\him)$ and $\var(\hip)$). 
(Formulas for computing the varentropy terms will be given later in the paper.)
The non-negativity of the covariance is an indication that the varentropy is reduced by the polar transform.
\begin{figure}[ht]
\begin{center}
\resizebox{!}{6cm}{%
\includegraphics{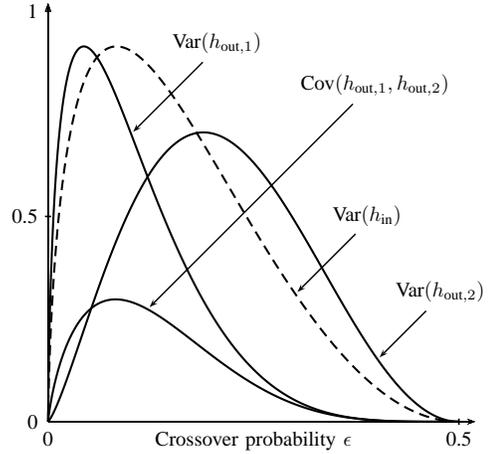}
}
\caption{Variance and covariance of entropy for BSC under polar transform.}
\label{figurebscvarentropy}
\end{center}
\end{figure}
\end{example}

\begin{example}\label{examplebec}
Here, $(X,Y)$ represents a binary erasure channel (BEC) with equiprobable inputs and an erasure probability $\epsilon$.
In other words, $X$ takes values in $\{0,1\}$, $Y$ takes values in $\{0,1,2\}$, and
$$
p_{X,Y}(x,y) = \begin{cases} \frac12 (1-\epsilon), & \text{if $x=y$};\\
\frac12 \epsilon, & \text{if $y=2$}.
\end{cases}
$$
In this case, there exist simple formulas for the varentropies.
$
\var(\him) = \var(\hip) = \var(h_{\text{in}}) = \epsilon(1-\epsilon),
$
$
\var(\hp) = (2\epsilon- \epsilon^2)(1-\epsilon)^2,
$
$
\var(\hm) = \epsilon^2(1-\epsilon^2).
$
The covariance is given by $\cov(\hm,\hp) = \epsilon^2(1-\epsilon)^2$.
The corresponding curves are plotted in Fig.~\ref{figurebecvarentropy}.

\begin{figure}[th]
\begin{center}
\resizebox{!}{6cm}{%
\includegraphics{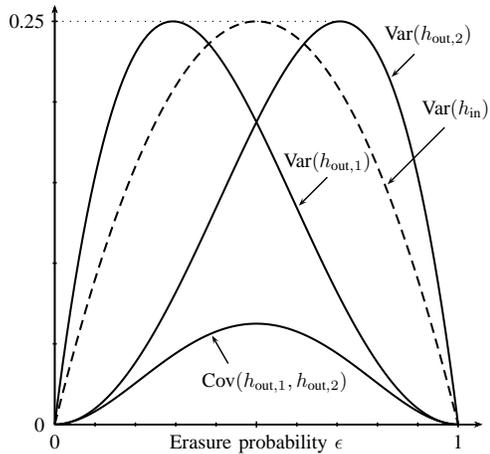}
}
\caption{Variance and covariance of entropy for BEC under polar transform.}
\label{figurebecvarentropy}
\end{center}
\end{figure}

\end{example}

\subsection{Organization}
The rest of the paper is organized as follows.
In Section~\ref{representations}, we define two canonical representations for a BDE $(X,Y)$
that eliminate irrelevant details from problem description and simplify the analysis.
In Section~\ref{Sect:correlation}, we review some basic facts about the covariance function that are needed in the remainder of the paper.
Section~\ref{sectionproof} contains the proof of Theorem~\ref{theoremvarentropy2}.
Section~\ref{higherorder} considers the behavior of varentropy under higher-order polar transforms and contains a self-contained proof of the main polarization result of \cite{ArikanIT2009}.

Throughout, we will often write $\overline{p}$ to denote $1-p$ for a real number $0\le p\le 1$. For $0\le p,q\le 1$, we will write 
$p \ast q$ to denote the convolution $pq + \overline{p}\,\overline{q}$.

\section{Canonical representations}\label{representations}
The information measures of interest relating to a given BDE $(X,Y)$ are determined solely by the joint probability distribution of $(X,Y)$; the specific forms of the alphabets $\cX$ and $\cY$ play no role. We have already fixed $\cX$ as $\{0,1\}$ so as to have a standard representation for $X$.
It is possible and desirable to re-parametrize the problem, if necessary, so that $\cY$ also has a canonical form.
Such canonical representations have been given for Binary Memoryless Symmetric (BMS) channels in \cite{MCT}. 
The class of BDEs $(X,Y)$ under consideration here is more general than the class of BMS channels, but similar ideas apply.
We will give two canonical representations for BDEs, which we will call the $\alpha$-representation and the $\beta$-representation.
The $\alpha$-representation replaces $\cY$ with a canonical alphabet $\cA\subset [0,1]$, and has the property of being ``lossless''.
The $\beta$-representation replaces $\cY$ with $\cB \subset [0,1/2]$; it is ``lossy'', but happens to be more convenient than the $\alpha$-representation
for purposes of proving Theorem~\ref{theoremvarentropy2}. 

\subsection{The $\alpha$-representation}\label{Subsect:alpha}

Given a BDE $(X,Y)$, we associate to each $y\in \cY$ the parameter
\begin{equation*}
\alpha(y) =\alpha_{X|Y}(y) \defn p_{X|Y}(0|y)
\end{equation*}
and define $A\defn \alpha(Y)$. The random variable $A$ takes values in the set $\cA \defn \{\alpha(y): y\in \cY\}$, which is always a subset of $[0,1]$. 
We refer to $A$ as the $\alpha$-representation of $(X,Y)$. 
The $\alpha$-representation provides economy by using a canonical alphabet $\cA$ in which any two symbols $y,y'\in \cY$ are merged into a common symbol $a$ whenever $\alpha(y)=\alpha(y')=a$.

We give some examples to illustrate the $\alpha$-representation.
For the BSC of Example~\ref{examplebsc}, we have $\alpha(0)=1-\epsilon$, $\alpha(1)=\epsilon$, $\cA = \{\epsilon,1-\epsilon\}$.
In the case of the BEC of Example~\ref{examplebec}, we have $\alpha(0)=1$, $\alpha(1)=0$, $\alpha(2)=1/2$, $\cA = \{0,1/2,1\}$.
As a third example, consider the channel $y= (-1)^x c + z$ where $c>0$ is a constant and $z\sim N(0,1)$ is a zero-mean unit-variance additive Gaussian noise,
independent of $x$.
In this case, we have
$$
\alpha(y) = \frac{e^{-(y-c)^2/2}}{e^{-(y-c)^2/2}+e^{-(y+c)^2/2}} = \frac{1}{1+e^{-2cy}},
$$
giving $\cA=(0,1)$.

The $\alpha$-representation provides ``sufficient statistics'' for computing the information measures of interest to us.
To illustrate this, let $(X,Y)$ be an arbitrary BDE and let $A= \alpha(Y)$ be its $\alpha$-representation.
Let $F_A$ denote the cumulative distribution function (CDF) of $A$.

The conditional entropy random variable is given by 
\begin{equation}\label{eq:entropyA}
h(X|Y) = h(X|A) = \begin{cases} - \log A, & X=0;\\
-\log \overline{A}, & X=1.
\end{cases}
\end{equation}
Hence, the conditional entropy can be calculated as 
\begin{align}
H(X|Y) & = \E\, h(X|Y) = \E\, h(X|A) =\E_A \E_{X|A}\, h(X|A)\nonumber\\
& = \E_A \cH(A) =\E\, \cH(A) = \int_0^1 \cH(a) \diff F_A(a), \label{eq:H(X|A)}
\end{align}
where $\cH(a) \defn -a \log a -\overline{a}\log \overline{a}$, $a\in [0,1]$, is the binary entropy function.
Likewise, the varentropy is given by 
\begin{equation}
V(X|Y) = V(X|A) = \E\, \cH_2(A) - \bigl[\E\, \cH(A)\bigr]^2 \label{eq:varentropyA},
\end{equation}
where $\cH_2(a) \defn -a \log^2 a -\overline{a} \log^2 \overline{a}$ and
\begin{align*}
\E\,\cH_2(A) = \int_0^1 \cH_2(a) \diff F_A(a).
\end{align*}

Finally, we note that $H(X) = \cH(p_X(0)) = \cH(\E\,A)$.
Thus, all information measures of interest in this paper can be computed given knowledge of the distribution of $A$.
\subsection{The $\beta$-representation}\label{Subsect:beta}
Although the $\alpha$-representation eliminates much of the irrelevant detail from $(X,Y)$, there is need for an even more compact representation for the type of problems considered in the sequel.
This more compact representation is obtained by associating to each $y\in \cY$ the parameter
\begin{align*}
\beta(y) & =\beta_{X|Y}(y) \defn \min\{p_{X|Y}(0|y),p_{X|Y}(1|y)\}.
\end{align*}
We define the $\beta$-representation of $(X,Y)$ as the random variable $B\defn \beta(Y)$. We denote the range of $B$ by $\cB\defn \{\beta(y): y\in \cY\}$ and note that $\cB \subset [0,1/2]$.

The $\beta$-representation can be obtained from the $\alpha$-representation by
$$
\beta(y) = \min\{\alpha(y),1-\alpha(y)\}, \quad B = \min\{A,\overline{A}\,\};
$$
but, in general, the $\alpha$-representation cannot be recovered from the $\beta$-representation.

For the BSC of Example~\ref{examplebsc}, we have $\beta(0)=\beta(1) =\epsilon$, giving $\cB = \{\epsilon\}$.
For the BEC of Example~\ref{examplebec}, we have $\beta(0)=\beta(1)=0$, $\beta(2)=1/2$, and $\cB = \{0,1/2\}$.
For the binary-input additive Gaussian noise channel, we have 
$$
\beta(y) = \frac{1}{1+e^{2c|y|}},
$$
with $\cB=(0,1/2]$.

As it is evident from \eqref{eq:entropyA}, the conditional entropy random variable $h(X|Y)$ cannot be expressed as a function of $(X,B)$.
However, if the CDF $F_B$ of $B$ is known, we can compute $H(X|Y)$ and $\varent(X|Y)$ by the following formulas that are analogous to \eqref{eq:H(X|A)} and \eqref{eq:varentropyA}:
\begin{equation*}
H(X|Y) = \E\, \cH(B), \quad V(X|Y) = \E\, \cH_2(B) - \bigl[\E\,\cH(B)\bigr]^2.
\end{equation*}

To see that $B$ is less than a ``sufficient statistic'' for information measures, one may note that $H(X)$ is not determined by knowledge of $F_B$ alone.
For example, for a BDE $(X,Y)$ with $\Pr(Y=X)=1$, we have $\Pr(B=0)=1$, independently of $p_X(0)$. 

Despite its shortcomings, the $\beta$-representation will be useful for our purposes due to the fact that the binary entropy function $\cH(p)$ is monotone over $p\in [0,1/2]$   
but not over $p\in [0,1]$. Thus, the random variable $\cH(B)$ is a monotone function of $B$ over the range of $B$, but $\cH(A)$ is not necessary so over the range of $A$.
This monotonicity will be important in proving certain correlation inequalities later in the paper.

\subsection{Classification of binary data elements}
Table~\ref{table:classification} gives a classification of a BDE $(X,Y)$ in terms of the properties of $B=\beta(Y)$. 
The classification allows an erasing BDE to be extreme as a special case.

\begin{table}[!ht]
\renewcommand{\arraystretch}{1.2}
\caption{Classification of BDEs}
\centering
\begin{tabular}{l|l}
\hline
Type & Property\\\hline\hline
pure &    $P(B = b)=1$ for some $b\in [0,1/2]$\\
extreme &  $P(B= 0) =1$ or $P(B=1/2) = 1$\\
perfect &   $P(B= 0) =1$\\
purely random (p.r.) &  $P(B=1/2) =1$\\
erasing & $P(B=0) + P(B=1/2) = 1$\\\hline
\end{tabular}
\renewcommand{\arraystretch}{1}
\label{table:classification}
\end{table}
\vspace*{3mm}

For a pure $(X,Y)$, we obtain from \eqref{eq:H(X|A)} and \eqref{eq:varentropyA} that
$$
H(X|Y) = \cH(b), \quad
V(X|Y) =  b(1-b) \log^2\bigg(\frac{b}{1-b}\bigg),
$$
where $b$ is the value that $B=\beta(Y)$ takes with probability 1.
A simple corollary to this is the following characterization of an extreme BDE. 
\begin{proposition}\label{prop:extreme}
Let $(X,Y)$ be a BDE and $B=\beta(Y)$.
The following three statements are equivalent: (i) $(X,Y)$ is extreme, (ii) $H(X|Y)=0$ or $H(X|Y)=1$, (iii) $\varent(X|Y)=0$. 
\end{proposition}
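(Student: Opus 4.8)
The plan is to establish the three-way equivalence through the two statements (i)$\Leftrightarrow$(ii) and (i)$\Leftrightarrow$(iii), using the $\beta$-representation formula $H(X|Y)=\E\,\cH(B)$ together with the pathwise description \eqref{eq:entropyA} of $h(X|Y)$. The organizing observation is that $B$ takes values in $[0,1/2]$, where $\cH$ is monotone increasing with $\cH(0)=0$ and $\cH(1/2)=1$, and where $0<\cH(b)<1$ for $b\in(0,1/2)$; hence $\cH(B)$ is a $[0,1]$-valued random variable. The pure-BDE formula $V(X|Y)=b(1-b)\log^2(b/\overline{b})$ already exhibits the pattern (it vanishes exactly at $b\in\{0,1/2\}$), so the proposition is the promised extension of this special case to arbitrary $(X,Y)$.

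For (i)$\Leftrightarrow$(ii), I would argue on the average $H(X|Y)=\E\,\cH(B)$. Since $0\le \cH(B)\le 1$, this average equals $0$ iff $\cH(B)=0$ almost surely, which by strict positivity of $\cH$ on $(0,1/2]$ is equivalent to $P(B=0)=1$, i.e. $(X,Y)$ is perfect. Symmetrically, $H(X|Y)=1$ iff $\cH(B)=1$ almost surely iff $P(B=1/2)=1$, i.e. $(X,Y)$ is purely random. Combining the two boundary cases gives $H(X|Y)\in\{0,1\}$ iff $(X,Y)$ is extreme.

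For (i)$\Leftrightarrow$(iii), I would use that $\varent(X|Y)=\var(h(X|Y))$ is a genuine variance, so it vanishes iff $h(X|Y)$ equals a constant $c$ almost surely, and read off the values of $h$ from \eqref{eq:entropyA}: conditioned on $A=a$, the variable $h$ takes the value $-\log a$ with probability $a$ and $-\log\overline{a}$ with probability $\overline{a}$. In the forward direction, if $(X,Y)$ is perfect then $A\in\{0,1\}$ almost surely and \eqref{eq:entropyA} forces $h\equiv 0$, while if $(X,Y)$ is purely random then $A=1/2$ almost surely and $h\equiv 1$; in either case $\varent(X|Y)=0$. Conversely, assuming $h\equiv c$, if $A$ placed positive mass on some $a\in(0,1)\setminus\{1/2\}$ then $-\log a\ne-\log\overline{a}$ would be two distinct attained values, contradicting constancy; so $A$ must be supported on $\{0,1/2,1\}$. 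Mass at $a\in\{0,1\}$ forces $c=0$ whereas mass at $a=1/2$ forces $c=1$, and these are incompatible, so either $A\in\{0,1\}$ almost surely (perfect) or $A=1/2$ almost surely (purely random); thus $(X,Y)$ is extreme.

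The delicate step is the converse in (i)$\Leftrightarrow$(iii), and the main obstacle it exposes is that the varentropy is \emph{not} simply the variance of $\cH(B)$: by \eqref{eq:varentropyA} it also carries the within-$A$ fluctuation of $h$, which the lossy statistic $F_B$ cannot see. This is precisely why a single value $a\in(0,1)\setminus\{1/2\}$ is fatal even though it corresponds to one value of $B$, namely $\beta=\min\{a,\overline{a}\}$, and it is the reason one must argue on the full joint law of $(X,A)$ via \eqref{eq:entropyA} rather than on $F_B$ alone. Once this pathwise constancy argument is in place, the collapse of the support onto $\{0,1\}$ or $\{1/2\}$, and hence the extreme classification, follows directly.
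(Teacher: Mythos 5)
Your proof is correct in substance, and your (i)$\Leftrightarrow$(ii) argument is exactly the fill-in the paper intends (the paper omits the proof altogether, calling it immediate from the pure-BDE formulas $H(X|Y)=\cH(b)$ and $V(X|Y)=b(1-b)\log^2(b/\overline{b})$). Two points in your (i)$\Leftrightarrow$(iii) part need attention, one of rigor and one of fact.

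Rigor: in the converse you suppose "$A$ placed positive mass on some $a\in(0,1)\setminus\{1/2\}$" and conclude that $A$ "must be supported on $\{0,1/2,1\}$". That inference is only valid for atomic $A$: ruling out atoms outside $\{0,1/2,1\}$ does not confine the support to those three points (an atomless $A$, e.g.\ uniform on $(0,1)$, has no atoms anywhere). The paper explicitly allows continuous observations --- its Gaussian example has $\cA=(0,1)$ with an atomless $\alpha$-parameter --- so this case cannot be waved off. The clean repair is the law of total variance: conditioned on $Y$, the variable $h(X|Y)$ of \eqref{eq:entropyA} has mean $\cH(A)$ and variance $\cH_2(A)-\cH(A)^2=A\overline{A}\log^2(A/\overline{A})$, whence
\[
\varent(X|Y)=\E\bigl[A\overline{A}\log^2(A/\overline{A})\bigr]+\var(\cH(A)).
\]
Both terms are nonnegative; the first vanishes iff $A\in\{0,1/2,1\}$ almost surely, and given that, the second vanishes iff all mass sits on $\{0,1\}$ or all on $\{1/2\}$ --- which is extremality, with no case analysis on atoms.

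Fact: your closing justification --- that varentropy "carries the within-$A$ fluctuation of $h$, which the lossy statistic $F_B$ cannot see", so that "one must argue on the full joint law of $(X,A)$" --- contradicts the paper. Section~\ref{Subsect:beta} states $\varent(X|Y)=\E\,\cH_2(B)-\bigl[\E\,\cH(B)\bigr]^2$, so varentropy \emph{is} a functional of $F_B$; what $F_B$ fails to determine is $H(X)$, not $\varent(X|Y)$. The within-symbol fluctuation is itself $F_B$-computable, namely $\cH_2(B)-\cH(B)^2=B\overline{B}\log^2(B/\overline{B})$, and rewriting the paper's formula as
\[
\varent(X|Y)=\E\bigl[B\overline{B}\log^2(B/\overline{B})\bigr]+\var(\cH(B))
\]
gives the shortest complete proof of (i)$\Leftrightarrow$(iii): the first term vanishes iff $B\in\{0,1/2\}$ a.s.\ (erasing), and then the second vanishes iff $B$ is a.s.\ constant, i.e.\ $B\equiv 0$ or $B\equiv 1/2$. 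So the route you declared unavoidable is in fact avoidable, and avoiding it is what makes the continuous case painless.
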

We omit the proof since it is immediate from the above formulas for $H(X|Y)$ and $V(X|Y)$ for a pure BDE.

For an erasing $(X,Y)$, it is easily seen that 
$$
H(X|Y) = p, \quad V(X|Y) = p(1-p)
$$
where $p=P[\beta(Y)=1/2]$ is the {\sl erasure} probability.

Parenthetically, we note that while the entropy function satisfies $H(X|Y)\le H(X)$, there is no such general relationship between $V(X|Y)$ and $V(X)$.
For an erasing $(X,Y)$ with $p_X(1)=1-p_X(0)=q$ and erasure probability $p$, we have $V(X) = q(1-q) \log^2[q/(1-q)]$ while $V(X|Y)=p(1-p)$.
Either $V(X) < V(X|Y)$ or $V(X) > V(X|Y)$ is possible depending on $q$ and $p$.

\subsection{Canonical representations under polar transform}
In this part, we explore how the $\alpha$- and $\beta$-representations evolve as they undergo a polar transform.
Let us return to the setting of Sect.~\ref{sec:PolarTransform}. Let $(U_1,\bY)$ and $(U_2; U_1,\bY)$ denote the two BDEs obtained from a pair of independent BDEs $(X_1,Y_1)$ and $(X_2,Y_2)$ by the polar transform.
Let $\him$, $\hip$, $\hm$, and $\hp$ denote the entropy random variables at the input and output of the polar transform.
For $i=1,2$, let $\Aii$ and $\Bii$ be the $\alpha$- and $\beta$-representations for the $i$th BDE at the input side;
and let $\Aoi$ and $\Boi$ be those for the $i$th BDE at the output side.
Let the sample values of these variables be denoted by small-case letters, such as $\aii$ for $\Aii$, $\bii$ for $\Bii$, etc.

\begin{proposition}\label{prop:alpha2alpha}
The $\alpha$-parameters at the input and output of a polar transform are related by 
\begin{align}\label{eq:rel1}
\Am = \Aim\ast \Aip,
\end{align}
\begin{align}\label{eq:rel2}
\Ap & = \begin{cases} \Aim \Aip/(\Aim\ast \Aip), & U_1=0;\\
\cAim \Aip/(\cAim\ast\Aip), & U_1 = 1.
\end{cases}
\end{align}
\end{proposition}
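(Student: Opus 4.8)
The plan is to compute each output $\alpha$-parameter directly from its defining conditional probability, using Bayes' rule together with the independence of the two input BDEs. Fixing a realization $\bY=(y_1,y_2)$ and abbreviating $\aim = p_{X_1|Y_1}(0|y_1)$, $\aip = p_{X_2|Y_2}(0|y_2)$, the whole proposition reduces to elementary manipulations of the joint conditional law of $(X_1,X_2)$ given $(y_1,y_2)$, which factors as a product precisely because $(X_1,Y_1)$ and $(X_2,Y_2)$ are independent.

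For \eqref{eq:rel1}, I would start from $\am = p_{U_1|\bY}(0|\bY) = P(X_1\oplus X_2 = 0 \mid y_1,y_2)$. Since $\{X_1\oplus X_2 = 0\}$ is the event $\{X_1 = X_2\}$, decomposing it into the two disjoint cases $\{X_1=X_2=0\}$ and $\{X_1=X_2=1\}$ and using the product form of the conditional law gives $\aim\aip + \caim\caip$, which is exactly $\aim \ast \aip$ by the definition of convolution. This establishes $\Am = \Aim \ast \Aip$.

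For \eqref{eq:rel2}, I would apply Bayes' rule to $\ap = p_{U_2|U_1,\bY}(0|U_1,\bY) = P(X_2 = 0 \mid X_1\oplus X_2 = U_1,\, y_1,y_2)$, treating the observed value of $U_1$ as a conditioning event. When $U_1 = 0$ the conditioning event is $\{X_1 = X_2\}$, and intersecting it with $\{X_2 = 0\}$ forces $X_1 = X_2 = 0$; dividing the joint probability $\aim\aip$ by the normalizer $\aim \ast \aip$ already computed above yields the first branch. When $U_1 = 1$ the conditioning event is $\{X_1 \neq X_2\}$, and intersecting it with $\{X_2 = 0\}$ forces $X_1 = 1,\, X_2 = 0$; the joint probability $\caim\aip$ divided by the normalizer $P(X_1\neq X_2 \mid y_1,y_2) = \caim\aip + \aim\caip = \caim \ast \aip$ yields the second branch.

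The argument is essentially routine, and I do not expect a genuine obstacle; the only points requiring care are keeping the two conditioning events $\{X_1 = X_2\}$ and $\{X_1 \neq X_2\}$ straight in the case split for $\Ap$, and recognizing each normalizer as a convolution $\ast$ so that the denominators match the claimed form. Since these identities hold for every realization $(y_1,y_2)$, passing from the sample values back to the random variables $\Am$ and $\Ap$ is immediate.
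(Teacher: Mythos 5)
Your proof is correct and takes essentially the same approach as the paper: fix a realization $(y_1,y_2)$, exploit the independence of the two input BDEs to factor the conditional law of $(X_1,X_2)$, and compute the output conditional probabilities directly, recognizing each sum as a convolution $\ast$. The paper writes out only \eqref{eq:rel1} explicitly (as a sum over $u_2$, which is the same two-case decomposition you use) and dismisses \eqref{eq:rel2} as ``similar reasoning''; your Bayes-rule computation is exactly that reasoning, carried out in full.
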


\begin{remark}
In \eqref{eq:rel2}, the event $\{\Aim\ast \Aip = 0\}$ leads to an indeterminate form $\Ap = 0/0$, but the conditional probability of $\{\Aim\ast \Aip = 0\}$ given $\{U_1=0\}$ is zero:
$\Aim\ast \Aip = 0$ implies $(\Aim,\Aip)\in \{(0,1),(1,0)\}$, which in turn implies $(X_1,X_2)\in \{(1,0),(0,1)\}$, giving $U_1 = 1$. Similarly, the event $\{\cAim\ast \Aip = 0\}$ is incompatible with $\{U_1=1\}$.  
\end{remark}

\begin{proof}
For a fixed $\bY=(y_1,y_2)$, the sample values of $\Am$ are given by
\begin{align*}
\am(y_1,y_2) & \defn p_{U_1|Y_1,Y_2}(0|y_1,y_2)\\
 & =\sum_{u_2} p_{U_1,U_2|Y_1,Y_2}(0,u_2|y_1,y_2)\\
& = \sum_{u_2} p_{X_1|Y_1}(u_2|y_1)p_{X_2|Y_2}(u_2|y_2)\\
& = \aim(y_1)\ast \aip(y_2).
\end{align*}
From this, the first statement \eqref{eq:rel1} follows.
The second statement \eqref{eq:rel2} can be obtained by similar reasoning.
\end{proof}

The above result leads to the following ``density evolution'' formula.
Let $\Fim$, $\Fip$, $\Fom$, and $\Fop$ be the CDFs of $\Aim$, $\Aip$, $\Am$, and $\Ap$, respectively.
\begin{proposition}\label{prop:densityevolution}
The CDFs of the $\alpha$-parameters at the output of a polar transform are related to the CDFs of the $\alpha$-parameters at the input by 
\begin{align*}
\Fom(a) & = \iint\limits_{a_1\ast a_2 \le a} \diff \Fim(a_1) \diff \Fip(a_2)
\end{align*}
\begin{align*}
\Fop(a) & = \iint\limits_{(a_1a_2/a_1\ast a_2) \le a} (a_1\ast a_2) \diff \Fim(a_1) \diff \Fip(a_2)\\
& + \iint\limits_{(\overline{a}_1a_2/\overline{a}_1\ast a_2) \le a} (\overline{a}_1\ast a_2) \diff \Fim(a_1) \diff \Fip(a_2)
\end{align*}
\end{proposition}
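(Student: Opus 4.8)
The plan is to read both formulas off Proposition~\ref{prop:alpha2alpha} using the independence of the input $\alpha$-parameters together with the law of total probability. Since $(X_1,Y_1)$ and $(X_2,Y_2)$ are independent BDEs, the representations $\Aim=\alpha_{X_1|Y_1}(Y_1)$ and $\Aip=\alpha_{X_2|Y_2}(Y_2)$ are independent, so the joint law of $(\Aim,\Aip)$ factors as $\diff\Fim(a_1)\,\diff\Fip(a_2)$. This factorization is what turns both single integrals over the output variable into the double integrals in the statement.

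For $\Fom$ the argument is immediate. By \eqref{eq:rel1} we have $\Am=\Aim\ast\Aip$, hence
\begin{align*}
\Fom(a)=\Pr(\Am\le a)=\Pr(\Aim\ast\Aip\le a),
\end{align*}
and integrating the product law over the region $\{a_1\ast a_2\le a\}$ gives the stated expression.

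For $\Fop$ the essential point is that $\Ap$ depends not only on $(\Aim,\Aip)$ but also on $U_1$, through the two branches of \eqref{eq:rel2}. I would therefore first split $\Fop(a)=\Pr(\Ap\le a,\,U_1=0)+\Pr(\Ap\le a,\,U_1=1)$ and then condition on $(\Aim,\Aip)$. The key identity, extracted from the proof of Proposition~\ref{prop:alpha2alpha}, is that conditioned on $(\Aim,\Aip)=(a_1,a_2)$ the variable $U_1$ equals $0$ with probability $\am(y_1,y_2)=a_1\ast a_2$ and, since $1-(a_1\ast a_2)=\overline{a}_1\ast a_2$, equals $1$ with probability $\overline{a}_1\ast a_2$. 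On each conditioning event $\Ap$ is, by \eqref{eq:rel2}, a deterministic function of $(a_1,a_2)$, equal to $a_1a_2/(a_1\ast a_2)$ when $U_1=0$ and to $\overline{a}_1a_2/(\overline{a}_1\ast a_2)$ when $U_1=1$. Substituting these conditional probabilities as weights and integrating against $\diff\Fim(a_1)\,\diff\Fip(a_2)$ over the appropriate sublevel regions yields exactly the two integrals claimed. The main thing to get right is this bookkeeping: recognizing that the mixing weights appearing inside the two integrals are precisely $\Pr(U_1=0\mid\Aim,\Aip)$ and $\Pr(U_1=1\mid\Aim,\Aip)$.

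The only delicate point is the indeterminate form flagged in the remark after Proposition~\ref{prop:alpha2alpha}: on $\{U_1=0\}$ the integrand $a_1a_2/(a_1\ast a_2)$ is undefined where $a_1\ast a_2=0$. I would dispose of this exactly as the remark does—the weight $a_1\ast a_2$ multiplying the integrand vanishes on that set, so it contributes nothing to the first integral, and symmetrically $\overline{a}_1\ast a_2=0$ contributes nothing to the second. Thus the $0/0$ ambiguity is harmless and no separate treatment of boundary cases is required. No single step is a genuine obstacle; the whole proof is a careful application of conditioning on $U_1$ on top of the algebraic relations already established.
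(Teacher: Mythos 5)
Your proposal is correct and follows exactly the route the paper indicates: the paper omits the detailed proof but states that the formulas follow from \eqref{eq:rel1} and \eqref{eq:rel2} with the integrands $(a_1\ast a_2)$ and $(\overline{a}_1\ast a_2)$ interpreted as the conditional probabilities of $U_1=0$ and $U_1=1$ given $(\Aim,\Aip)=(a_1,a_2)$, which is precisely the conditioning-and-bookkeeping argument you carry out. Your handling of the $0/0$ indeterminacy via the vanishing weights also matches the paper's remark, so nothing is missing.
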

These density evolution equations follow from \eqref{eq:rel1} and \eqref{eq:rel2}. In the expression for $\Fop(a)$, the integrands $(a_1\ast a_2)$ and $(\overline{a}_1\ast a_2)$ correspond to the conditional probability of $U_1$ being 0 and 1, respectively, given that $\Aim=a_1$ and $\Aip=a_2$. We omit the proof for brevity.

For the $\beta$-parameters, the analogous result to Proposition~\ref{prop:alpha2alpha} is as follows.
\begin{gather*}
\Bm = \gamma(\Bim\ast \Bip),\\
\Bp  = \begin{cases} \gamma(\Bim\Bip/(\Bim\ast \Bip)), & \Gamma >0;\\
 \gamma(\cBim \Bip/(\cBim\ast \Bip)), & \Gamma \le 0,
\end{cases}
\end{gather*}
where $\gamma(x) \defn \min\{x,1-x\}$ for any $x\in [0,1]$ and $\Gamma \defn (1/2 - U_1)(1/2 -\Aim)(1/2 -\Aip)$.
We omit the derivation of these evolution formulas for the $\beta$-parameters since they will not be used in the sequel.
The main point to note here is that the knowledge of $(\Bim,\Bip,U_1)$ is not sufficient to determine $\Gamma$,
hence not sufficient to determine $\Bp$. 
So, there is no counterpart of Proposition~\ref{prop:densityevolution} for the $\beta$-parameters.

Although there is no general formula for tracking the evolution of the $\beta$-parameters through the polar transform,
there is an important exceptional case in which we can track that evolution, namely, the case where at least one of the BDEs at the transform input is extreme. 
This special case will be important in the sequel, hence we consider it in some detail.

Table~\ref{table:extremepolar} summarizes the evolution of the $\beta$-parameters for all possible situations in which at least one of the input BDEs is extreme.
(In the table ``p.r.'' stands for ``purely random''.)

\begin{table}[!ht]
\renewcommand{\arraystretch}{1.2}
\caption{Polar transform of extreme BDEs}
\centering
\begin{tabular}{c|c|c|c}
\hline
$\Bim$ & $\Bip$ & $\Bm$  & $\Bp$ \\\hline\hline
perfect &  any  & $\Bip$ & perfect\\
p.r. &  any  & p.r. & $\Bip$\\
any &  perfect & $\Bim$ & perfect\\
any  &  p.r.  & p.r. & $\Bim$\\\hline
\end{tabular}
\renewcommand{\arraystretch}{1}
\label{table:extremepolar}
\end{table}
\vspace*{3mm}

The following proposition states more precisely the way the $\beta$-parameters evolve when one of the input BDEs is extreme.
\begin{proposition}\label{prop:polarextreme}
If $\Bim$ is extreme, then the $\beta$-parameters at the output are given by
\begin{align}\label{eq:polarextreme1}
\Bm & = \begin{cases} 
           \Bip, & \text{if $\Bim$ is perfect}\\
           \frac12, & \text{if $\Bim$ is p.r.};
\end{cases} 
\end{align}
\begin{align}\label{eq:polarextreme2}
\Bp & = \begin{cases} 
0, & \text{if $\Bim$ is perfect}\\
\Bip, & \text{if $\Bim$ is p.r.}.
\end{cases} 
\end{align}
If $\Bip$ is extreme, then \eqref{eq:polarextreme1} and \eqref{eq:polarextreme2} hold after 
interchanging $\Bim$ and $\Bip$.
\end{proposition}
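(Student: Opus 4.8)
The plan is to carry out the entire computation in the $\alpha$-representation, where \eqref{eq:rel1} and \eqref{eq:rel2} of Proposition~\ref{prop:alpha2alpha} give closed forms for the output parameters, and to pass to the $\beta$-representation only at the end through the identity $B=\min\{A,\overline{A}\,\}$. The first step is to restate the hypothesis on $\Bim$ as a constraint on $\Aim$: since $\Bim=\min\{\Aim,\cAim\}$, the condition that $\Bim$ is perfect ($\Bim=0$ with probability one) is equivalent to $\Aim\in\{0,1\}$ a.s., and the condition that $\Bim$ is purely random ($\Bim=1/2$ with probability one) is equivalent to $\Aim=1/2$ a.s.

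For the perfect case I would substitute $\Aim\in\{0,1\}$ into \eqref{eq:rel1}: evaluating the convolution gives $\Am=\cAip$ when $\Aim=0$ and $\Am=\Aip$ when $\Aim=1$, and in either outcome $\min\{\Am,\cAm\}=\min\{\Aip,\cAip\}=\Bip$, which is the asserted value of $\Bm$. For $\Bp$ I would substitute $\Aim\in\{0,1\}$ into the two branches of \eqref{eq:rel2} and observe that the numerator always forces $\Ap\in\{0,1\}$, so that $\Bp=\min\{\Ap,\cAp\}=0$. The one delicate point is that two of the four $(\Aim,U_1)$ combinations produce the indeterminate form $0/0$; here I would invoke the Remark following Proposition~\ref{prop:alpha2alpha}, which shows that the offending event $\{\Aim\ast\Aip=0\}$ (resp. $\{\cAim\ast\Aip=0\}$) has conditional probability zero given $\{U_1=0\}$ (resp. $\{U_1=1\}$), so these combinations may be discarded and the stated identities hold almost surely.

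The purely-random case is shorter. With $\Aim=1/2$, \eqref{eq:rel1} gives $\Am=\tfrac12\Aip+\tfrac12\cAip=\tfrac12$, hence $\Bm=1/2$; and since $\Aim\ast\Aip=\cAim\ast\Aip=\tfrac12$, both branches of \eqref{eq:rel2} collapse to $\Ap=\Aip$, so $\Bp=\min\{\Aip,\cAip\}=\Bip$. The companion statement, in which $\Bip$ rather than $\Bim$ is extreme, follows from the identical argument with the constraint imposed on $\Aip$; although \eqref{eq:rel2} is not symmetric in the two inputs, the substitution $\Aip\in\{0,1\}$ or $\Aip=1/2$ is equally direct and yields the rows with $\Bim$ and $\Bip$ interchanged.

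I expect the difficulty to be bookkeeping rather than anything conceptual: keeping the four branch/sign combinations of the perfect case straight, and arguing cleanly that the indeterminate forms live on conditional-probability-zero events so that the conclusions hold a.s. As a sanity check, and an independent route, I would re-derive the same four rows directly from the stated $\beta$-evolution formulas $\Bm=\gamma(\Bim\ast\Bip)$ and the two-branch expression for $\Bp$: when $\Bim\in\{0,1/2\}$ one finds that the two branches indexed by the sign of $\Gamma$ return a common value, which is exactly why $(\Bim,\Bip,U_1)$ determines the output in this special case even though, as noted in the text, it fails to do so in general.
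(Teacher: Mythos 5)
Your proposal is correct and follows essentially the same route as the paper: the paper's proof also reduces the hypothesis on $\Bim$ to $\Aim\in\{0,1\}$ (resp.\ $\Aim=1/2$), substitutes into \eqref{eq:rel1}--\eqref{eq:rel2}, and recovers $\Boi$ via $B=\min\{A,\overline{A}\,\}$, except that it writes out only the first case ($\Bim$ perfect, claim for $\Bm$) and dismisses the remaining three as similar. Your version is in fact slightly more complete, since it explicitly treats the $0/0$ indeterminacies through the Remark after Proposition~\ref{prop:alpha2alpha} and spells out the non-symmetric substitution for $\Bip$ extreme, but no new idea is involved.
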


\begin{proof}
Suppose $\Bim\equiv 0$ (perfect), then $\Aim$ can only take the values $0$ and $1$,
and we obtain from \eqref{eq:rel1} that
$$
\Am = \Aim\ast \Aip = \begin{cases} \Aip, & \Aim = 0;\\
\cAip, & \Aim = 1.
\end{cases}
$$
Thus, $\Bm = \min(\Am,\cAm) = \min(\Aip,\cAip) = \Bip$, completing the proof of the first case in \eqref{eq:polarextreme1}.
We skip the proof of the remaining three cases since they follow by similar reasoning.
\end{proof}

\section{Covariance review}\label{Sect:correlation}
In this part, we collect some basic facts about the covariance function, which we will need in the following sections.
The first result is the following formula for splitting a covariance into two parts.
\begin{lemma}\label{lemmacovariancedecomposition}
Let $\bS$, $\bT$ be jointly distributed random vectors over $\R^m$ and $\R^n$, respectively.
Let $f,g:\R^{m+n}\to \R$ be functions such that $\cov[f(\bS,\bT),g(\bS,\bT)]$ exists, {\sl i.e.\/}, $\E f(\bS,\bT)g(\bS,\bT)$, $\E f(\bS,\bT)$, and $\E g(\bS,\bT)$ all exist.
Then,
\begin{align}\label{covariancedecomposition}
\cov[f(\bS,\bT),& g(\bS,\bT)]  = \E_{\bT} \cov_{\bS|\bT}[f(\bS,\bT),g(\bS,\bT)] \nonumber \\
& + \cov_{\bT}[\E_{\bS|\bT}f(\bS,\bT),\E_{\bS|\bT} g(\bS,\bT)].
\end{align}
\end{lemma}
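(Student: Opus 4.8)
The plan is to reduce everything to the defining identity $\cov(F,G) = \E[FG] - \E[F]\,\E[G]$ applied with the shorthands $F \defn f(\bS,\bT)$ and $G \defn g(\bS,\bT)$, and then to insert a conditioning on $\bT$ via the tower property of expectation. Every interchange of expectations I use is legitimate precisely because the hypothesis guarantees that $\E[FG]$, $\E[F]$, and $\E[G]$ all exist; this is the only place where the integrability assumption does any work, so I will flag it rather than belabor it.

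First I would compute the ``cross'' term. Writing $\phi(\bT) \defn \E_{\bS|\bT}[F]$ and $\psi(\bT) \defn \E_{\bS|\bT}[G]$ for the two conditional means, the definition of conditional covariance gives $\E_{\bS|\bT}[FG] = \cov_{\bS|\bT}(F,G) + \phi(\bT)\psi(\bT)$ for $\bT$-almost every value. Taking $\E_{\bT}$ of both sides and using the tower property $\E[FG] = \E_{\bT}\,\E_{\bS|\bT}[FG]$, I obtain $\E[FG] = \E_{\bT}[\cov_{\bS|\bT}(F,G)] + \E_{\bT}[\phi(\bT)\psi(\bT)]$.

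Next I would handle the product of the marginal means. Again by the tower property, $\E[F] = \E_{\bT}[\phi(\bT)]$ and $\E[G] = \E_{\bT}[\psi(\bT)]$, so $\E[F]\,\E[G] = \E_{\bT}[\phi]\,\E_{\bT}[\psi]$. Subtracting this from the previous display and recognizing the difference $\E_{\bT}[\phi\psi] - \E_{\bT}[\phi]\,\E_{\bT}[\psi]$ as exactly $\cov_{\bT}(\phi,\psi) = \cov_{\bT}[\E_{\bS|\bT}f,\, \E_{\bS|\bT}g]$ yields the claimed decomposition \eqref{covariancedecomposition}.

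There is essentially no hard step here: the statement is the law of total covariance, and once the conditioning on $\bT$ is introduced the argument is a two-line bookkeeping exercise. The only point that needs any care, and the only conceivable obstacle, is ensuring that each conditional and marginal moment I invoke is well defined and that the tower property is applicable. I would dispatch this by noting that the assumed existence of $\E[FG]$, $\E[F]$, and $\E[G]$ supplies the absolute integrability needed to iterate expectations, and that existence of $\cov_{\bT}(\phi,\psi)$ is then forced by the identity itself, being the difference of terms already shown to exist.
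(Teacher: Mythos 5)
Your proposal is correct and is essentially the paper's own argument: both proofs expand $\cov(f,g)$ via the tower property $\E = \E_{\bT}\E_{\bS|\bT}$ and introduce the cross term $\E_{\bT}\bigl[\E_{\bS|\bT}f\cdot\E_{\bS|\bT}g\bigr]$ (the paper by adding and subtracting it, you by invoking the definition of conditional covariance), then regroup into the two stated covariance pieces. The integrability bookkeeping you flag is likewise all that the paper's hypotheses are used for, so there is no substantive difference.
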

Although this is an elementary result, we give a proof here mainly for illustrating the notation.
Our proof follows \cite{Esary1967}.
\begin{proof}
We will omit the arguments of the functions for brevity.
\begin{align*}
\cov(f,g) & = \E_{\bS,\bT} fg - \E_{\bS,\bT} f \cdot \E_{\bS,\bT} g\\
& = \E_{\bT} \E_{\bS|\bT} fg - \E_{\bT} \big[ \E_{\bS|\bT} f \cdot \E_{\bS|\bT} g\big] \\
&+ \E_{\bT} \big[ \E_{\bS|\bT} f \cdot \E_{\bS|\bT} g\big] - \E_{\bT}\E_{\bS|\bT} f \cdot \E_{\bT}\E_{\bS|\bT} g\\
& = \E_{\bT} \cov_{\bS|\bT}(f , g) + \cov_{\bT} (\E_{\bS|\bT} f, \E_{\bS|\bT} g).
\end{align*}
\end{proof}

The second result we recall is the following inequality.

\begin{lemma}[Chebyshev's covariance inequality]\label{lemma:chebyshev}
Let $X$ be a random variable taking values over $\R$ and let $f,g:\R\to \R$ be any two nondecreasing functions.
Suppose that $\cov(f(X),g(X))$ exists, {\sl i.e.}, $\E f(X)g(X)$, $\E f(X)$, and $\E g(X)$ all exist. Then,
\begin{equation}\label{eq:chebyshev}
\cov(f(X),g(X)) \ge 0.
\end{equation}
\end{lemma}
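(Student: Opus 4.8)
The plan is to prove the inequality by the classical ``two independent copies'' coupling argument, which converts the monotonicity hypothesis into a pointwise sign condition that integrates trivially. First I would introduce an auxiliary random variable $X'$ that is independent of $X$ and identically distributed to it. The key observation is that since $f$ and $g$ are both nondecreasing, for every pair of reals $(x,x')$ the two factors $f(x)-f(x')$ and $g(x)-g(x')$ are either both nonnegative (when $x\ge x'$) or both nonpositive (when $x\le x'$); in either case their product is nonnegative. Consequently the random variable $[f(X)-f(X')][g(X)-g(X')]$ is nonnegative almost surely, and hence so is its expectation:
\begin{equation*}
\E\bigl[(f(X)-f(X'))(g(X)-g(X'))\bigr] \ge 0.
\end{equation*}

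Next I would expand this product into four terms and evaluate each expectation separately. The two ``diagonal'' terms give $\E[f(X)g(X)]$ and $\E[f(X')g(X')]$, which are equal because $X$ and $X'$ have the same law. The two ``cross'' terms factor under the independence of $X$ and $X'$, so that $\E[f(X)g(X')]=\E[f(X)]\,\E[g(X')]=\E[f(X)]\,\E[g(X)]$, and likewise for $\E[f(X')g(X)]$. Collecting these, the displayed inequality reduces to $2\E[f(X)g(X)]-2\E[f(X)]\,\E[g(X)]\ge 0$, which is exactly $2\cov(f(X),g(X))\ge 0$, giving the claim.

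The main point requiring care is integrability: the expansion into four terms and the factorization of the cross terms are only legitimate if each of the four expectations is finite. The hypothesis guarantees that $\E[f(X)g(X)]$, $\E f(X)$, and $\E g(X)$ exist; reading ``exist'' as absolute integrability, we have $\E|f(X)|<\infty$ and $\E|g(X)|<\infty$, so by independence the cross terms satisfy $\E|f(X)g(X')|=\E|f(X)|\,\E|g(X')|<\infty$ as well. Thus every term in the expansion is well defined and the rearrangement is justified. In short, the only genuine obstacle is the bookkeeping of existence of the terms; the sign argument itself is immediate from monotonicity and requires no estimates or limiting procedures.
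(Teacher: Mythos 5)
Your proof is correct and follows essentially the same route as the paper: both use the two-independent-copies argument, with the paper citing the identity $\cov(f(X),g(X)) = \frac{1}{2}\E'\E[(f(X)-f(X'))(g(X)-g(X'))]$ from Hardy--Littlewood--P\'olya while you derive it by expanding the product and invoking independence. Your additional attention to integrability of the cross terms is a welcome touch but does not change the substance of the argument.
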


\begin{proof}
Let $X'$ be an independent copy of $X$.
Let $\E$ and $\E'$ denote expectation with respect to $X$ and $X'$, respectively.
The proof follows readily from the following identity whose proof can be found in \cite[p.~43]{Hardy}. 
\begin{align*}
\cov&(f(X),g(X)) = \E f(X)g(X) - \E f(X) \E g(X)\\
& = \frac{1}{2} \E' \E [(f(X)- f(X'))(g(X)-g(X'))].
\end{align*}
Now note that for any $x,x'\in \R$, $f(x)-f(x')$ and $g(x)-g(x')$ have the same sign since both $f$ and $g$ are
nondecreasing. Thus, $(f(x)- f(x'))(g(x)-g(x'))\ge 0$, and non-negativity of the covariance follows.
\end{proof}

\section{Proof of Theorem~\ref{theoremvarentropy2}}\label{sectionproof}

Let us recall the setting of Theorem~\ref{theoremvarentropy2}.
We have two independent BDEs $(X_1,Y_1)$ and $(X_2,Y_2)$ as inputs of a polar transform,
and two BDEs $(U_1,\bY)$ and $(U_2; U_1,\bY)$ at the output, with $U_1 = X_1\oplus X_2$, $U_2= X_2$, and $\bY =(Y_1,Y_2)$.
Associated with these BDEs are the conditional entropy random variables $\him$, $\hip$, $\hm$, and $\hp$, as defined by \eqref{inputentropies} and \eqref{outputentropies}.
We will carry out the proof mostly in terms of the canonical parameters $A_i\defn \alpha_{X_i|Y_i}(Y_i)$ and $B_i \defn \beta_{X_i|Y_i}(Y_i)$, $i=1,2$.
For shorthand, we will often write $\bX=(X_1,X_2)$, $\bU=(U_1,U_2)$, $\bA=(A_1,A_2)$, and $\bB=(B_1,B_2)$.

We will carry out our calculations in the probability space defined by the joint ensemble $(\bX,\bY)$.
Probabilities over this ensemble will be denoted by $P(\cdot)$ and expectations by $\E[\cdot]$.
Partial and conditional expectations and covariances will be denoted by $\E_{\bY}$, $\E_{\bX|\bY}$, $\cov_{\bY}$, $\cov_{\bX|\bY}$, etc.
Due to the 1-1 nature of the correspondence between $\bU$ and $\bX$, 
expectation and covariance operators such as $\E_{\bU|\bY}$ and $\cov_{\bU|\bY}$ 
will be equivalent to $\E_{\bX|\bY}$ and $\cov_{\bX|\bY}$, respectively.
We will prefer to use expectation operators in terms of the primary variables $\bX$ and $\bY$ rather than the secondary (derived) variables such as $\bU$,
$\bA$, $\bB$, to emphasize that the underlying space is $(\bX,\bY)$.
We note that, due to the independence of $Y_1$ and $Y_2$, $A_1$ and $A_2$ are independent; likewise, $B_1$ and $B_2$ are independent.

\subsection{Covariance decomposition step}
As the first step of the proof of Theorem~\ref{theoremvarentropy2}, we use the covariance decomposition formula \eqref{covariancedecomposition} to write
\begin{align}\label{covar}
\cov(\hm,\hp)  =  & \E_{\bY}\cov_{\bX|\bY}(\hm,\hp ) \nonumber \\
 & + \cov_{\bY}(\E_{\bX|\bY}\hm,\E_{\bX|\bY}\hp ).
\end{align}
For brevity, we will use the notation
$$
\cov_1 \defn \E_{\bY}\cov_{\bX|\bY}(\hm,\hp )
$$
$$
\cov_2 \defn \cov_{\bY}(\E_{\bX|\bY}\hm,\E_{\bX|\bY} \hp)
$$
to denote the two terms on the right hand side of \eqref{covar}.
Our proof of Theorem~\ref{theoremvarentropy2} will consist in proving the following two statements.

\begin{proposition}\label{proposition1}
We have $\cov_1 \ge 0$, with equality iff either $(X_1,Y_1)$ or $(X_2,Y_2)$ is an erasing BDE.
\end{proposition}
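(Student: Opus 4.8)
The plan is to evaluate the inner conditional covariance $\cov_{\bX|\bY}(\hm,\hp)$ in closed form for each fixed $\bY=(y_1,y_2)$, show that it is nonnegative pointwise, and then integrate over $\bY$. The decisive simplification is that, for fixed $\bY$, $\hm$ is a deterministic function of $U_1$ alone. So I would apply the covariance decomposition of Lemma~\ref{lemmacovariancedecomposition} once more, now conditioning on $U_1$ inside the measure $\bX|\bY$: the term $\E_{U_1|\bY}\cov_{\bX|U_1,\bY}(\hm,\hp)$ vanishes because $\hm$ is constant given $(U_1,\bY)$, and $\E_{\bX|U_1,\bY}\hm=\hm$, leaving
$$\cov_{\bX|\bY}(\hm,\hp) = \cov_{U_1|\bY}\!\bigl(\hm,\ \E_{\bX|U_1,\bY}\hp\bigr).$$
Here $\E_{\bX|U_1,\bY}\hp=\cH(\Ap)$ is the binary entropy of the output parameter $\Ap$, which by Proposition~\ref{prop:alpha2alpha} is an explicit function of $U_1$.

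Since $U_1$ is binary with $P(U_1=0\mid\bY)=\Aim\ast\Aip$, the covariance of two functions of $U_1$ factors as $\cov_{U_1|\bY}(g,\phi)=p_0p_1\,(g(0)-g(1))(\phi(0)-\phi(1))$. Substituting the values of $\hm$ and $\cH(\Ap)$ coming from the $\alpha$-recursions \eqref{eq:rel1}--\eqref{eq:rel2}, and writing $a_1,a_2$ for the sample values of $\Aim,\Aip$ and $s\defn a_1\ast a_2$, I obtain
$$\cov_{\bX|\bY}(\hm,\hp) = s\,\overline{s}\,\log\frac{\overline{s}}{s}\,\Bigl[\cH\Bigl(\frac{a_1a_2}{s}\Bigr)-\cH\Bigl(\frac{\overline{a}_1a_2}{\overline{s}}\Bigr)\Bigr].$$
The problem then reduces to showing this product is $\ge 0$ for every $(a_1,a_2)\in[0,1]^2$.

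The sign analysis is where I expect the real work to lie, and I would handle it by passing to the $\beta$-representation. First I would note that the displayed integrand is invariant under $a_i\mapsto\overline{a}_i$ (flipping $a_1$, say, sends $s\leftrightarrow\overline{s}$ and swaps the two entropy arguments, producing two sign reversals that cancel), so I may assume $a_1=b_1\le 1/2$ and $a_2=b_2\le 1/2$. Under this assumption a short computation gives $s\ge 1/2$, hence $\log(\overline{s}/s)\le 0$, and it remains only to prove the bracketed entropy difference is $\le 0$. Using that $\cH$ is increasing in $\min\{\cdot,\overline{\,\cdot\,}\}$ over $[0,1]$, this is equivalent to $b_1b_2/s \le \min\{\overline{b}_1 b_2,\, b_1\overline{b}_2\}/\overline{s}$, which I would verify by cross-multiplication: after reducing to the case $b_1\le b_2$ the difference of the two sides is proportional to $b_1\overline{b}_1(1-2b_2)$, manifestly nonnegative on $[0,1/2]$. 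This yields $\cov_{\bX|\bY}(\hm,\hp)\ge 0$ pointwise and hence $\cov_1=\E_{\bB}[\cdot]\ge 0$.

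For the equality condition I would read the zero set directly off the factored form: the integrand vanishes exactly when $b_1\in\{0,1/2\}$ or $b_2\in\{0,1/2\}$ (one factor forces $s\overline{s}=0$ only at $b_1=b_2=0$, the logarithm vanishes at $s=1/2$, and the entropy difference vanishes precisely along $b_1\overline{b}_1(1-2b_2)=0$), and it is strictly positive when both $b_1,b_2\in(0,1/2)$. Since $B_1$ and $B_2$ are independent, $\cov_1=0$ forces the integrand to vanish $\E_{\bB}$-almost surely, which by independence happens iff $P(B_1\in\{0,1/2\})=1$ or $P(B_2\in\{0,1/2\})=1$ --- exactly the statement that $(X_1,Y_1)$ or $(X_2,Y_2)$ is erasing. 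The main obstacle is thus the pointwise sign inequality together with the bookkeeping of its equality locus; once the problem is reduced to the $\beta$-parameters and the monotonicity of $\cH$ on $[0,1/2]$ is exploited, the rest is routine.
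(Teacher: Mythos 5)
Your proposal is correct and takes essentially the same route as the paper: your factored expression for $\cov_{\bX|\bY}(\hm,\hp)$ is exactly the paper's function $f(a_1,a_2)$ (up to the identity $\cH(x)=\cH(\overline{x})$ and a compensating sign flip in the log term), and your sign and equality analysis---reduction to $[0,1/2]^2$ by the flip symmetry, monotonicity of $\cH$ on $[0,1/2]$, the cross-multiplication inequality, and the independence argument for the erasing characterization---mirrors the paper's Lemmas on $f$ and its zero set. The only cosmetic difference is that you obtain the closed form via a second covariance decomposition over $U_1$ plus the binary-variable covariance factorization, whereas the paper expands $\E_{\bX|\by}\{[\hm-H(U_1|\by)]\hp\}$ directly; the computation is the same.
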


\begin{proposition}\label{proposition2}
We have $\cov_2 \ge 0$.
\end{proposition}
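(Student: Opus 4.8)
The plan is to recognize $\cov_2$ as the covariance, under the product law of the independent pair $(B_1,B_2)$, of two explicit functions of $(B_1,B_2)$, and then to deduce its nonnegativity by the Harris/FKG mechanism: combine the covariance decomposition of Lemma~\ref{lemmacovariancedecomposition} with Chebyshev's inequality (Lemma~\ref{lemma:chebyshev}), applied one coordinate at a time. Since Proposition~\ref{proposition2} only asserts $\cov_2\ge0$ (the equality analysis is carried by Proposition~\ref{proposition1}), this is all that is needed.

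First I would compute the two inner expectations. Since $\E_{\bX|\bY}\hm = H(U_1\mid\bY{=}\by)=\cH(A_1\ast A_2)$ by Proposition~\ref{prop:alpha2alpha} together with \eqref{eq:H(X|A)}, and since $\cH(a_1\ast a_2)$ is invariant under each flip $a_i\mapsto\overline{a_i}$, this term equals $\cH(B_1\ast B_2)$, a function of $(B_1,B_2)$ alone. For the other argument the chain rule gives, pointwise in $\by$, $\E_{\bX|\bY}\hm+\E_{\bX|\bY}\hp=H(U_1,U_2\mid\bY)=H(X_1\mid Y_1)+H(X_2\mid Y_2)=\cH(B_1)+\cH(B_2)$, so that $\E_{\bX|\bY}\hp=\cH(B_1)+\cH(B_2)-\cH(B_1\ast B_2)$, again a function of $(B_1,B_2)$. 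Because $Y_1\perp Y_2$, the variables $B_1,B_2$ are independent, so $\cov_2$ is the covariance of these two functions under the product measure.

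The key step is to show that both functions are nondecreasing in each of $B_1,B_2$ on $[0,1/2]^2$. For $\cH(B_1\ast B_2)$ this is immediate: for $b_2\le1/2$ the value $b_1\ast b_2$ is nonincreasing in $b_1$ and lies in $[1/2,1]$, where $\cH$ is decreasing, so the composition is nondecreasing (and symmetrically in $b_2$). The hard part is the ``plus'' term $\psi(b_1,b_2)\defn\cH(b_1)+\cH(b_2)-\cH(b_1\ast b_2)$. Differentiating in $b_2$ and writing $s\defn b_1\ast b_2$, I reduce the claim $\partial\psi/\partial b_2\ge0$ to $\log(\overline{b_2}/b_2)\ge(1-2b_1)\log(s/\overline{s})$; since $0\le1-2b_1\le1$ and $\log(s/\overline{s})\ge0$ (as $s\ge1/2$), it suffices to prove $\overline{b_2}/b_2\ge s/\overline{s}$, i.e. $\overline{b_2}\,\overline{s}\ge b_2\,s$. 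This follows from the elementary identity $\overline{b_2}\,\overline{s}-b_2\,s=b_1(\overline{b_2}^{\,2}-b_2^{\,2})=b_1(\overline{b_2}-b_2)\ge0$. Monotonicity in $b_1$ follows by the symmetry of $\psi$ in its two arguments.

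With both functions coordinatewise nondecreasing, I finish by applying Lemma~\ref{lemmacovariancedecomposition} with the conditioning variable $B_1$; by independence the inner conditional law of $B_2$ is its marginal. For each fixed $b_1$ both functions are nondecreasing in $B_2$, so the inner covariance is $\ge0$ by Lemma~\ref{lemma:chebyshev}, and its $B_1$-average is $\ge0$. For the outer term, integrating each function over $B_2$ preserves monotonicity in $b_1$, so the two conditional means are nondecreasing in $B_1$ and their covariance is $\ge0$ by Lemma~\ref{lemma:chebyshev} again; hence $\cov_2\ge0$. I expect the main obstacle to be exactly the monotonicity of the ``plus'' conditional entropy $\psi$; once the algebraic identity $\overline{b_2}\,\overline{s}-b_2\,s=b_1(\overline{b_2}-b_2)$ is spotted, the remainder is the routine two-step FKG argument.
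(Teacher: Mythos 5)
Your proof is correct, and its skeleton coincides with the paper's: you identify $\cov_2=\cov[\Hm(\bB),\Hp(\bB)]$ with $\Hm(p,q)=\cH(p\ast q)$ and $\Hp(p,q)=\cH(p)+\cH(q)-\cH(p\ast q)$ (the paper's Lemma~\ref{cov2formula}, derived there too from Proposition~\ref{prop:alpha2alpha}, the chain rule, and flip-invariance), you prove both functions are coordinatewise nondecreasing on $[0,1/2]^2$, and you conclude by the identical two-step argument combining Lemma~\ref{lemmacovariancedecomposition} with Lemma~\ref{lemma:chebyshev}. The one genuine divergence is the proof of monotonicity of the ``plus'' function, the paper's Lemma~\ref{cov2monotonicity2}. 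The paper interprets $I(b_1)\defn\cH(b_1\ast b_2)-\cH(b_1)$ as the mutual information of a BSC with crossover probability $b_1$ and Bernoulli-$b_2$ input, invokes Gallager's theorem that mutual information is convex in the channel transition probabilities, and uses the endpoint values $I(0)=\cH(b_2)$, $I(1/2)=0$ to conclude that $I$ is decreasing on $[0,1/2]$. You instead differentiate directly: writing $s=b_1\ast b_2$, the claim $\partial\Hp/\partial b_2\ge 0$ reduces, using $0\le 1-2b_1\le 1$ and $\log(s/\overline{s})\ge 0$, to $\overline{b_2}\,\overline{s}\ge b_2\,s$, which your identity $\overline{b_2}\,\overline{s}-b_2 s=b_1(\overline{b_2}-b_2)\ge 0$ settles; I verified both the reduction and the identity, and they are right. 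Your route is more elementary and self-contained, avoiding the appeal to the convexity theorem; the paper's route is calculus-free, so it needs no care at the boundary points $b_2=0$ or $s=1$ where the logarithmic derivatives blow up (in your argument this is harmless: monotonicity on the interior plus continuity of $\Hp$ on the closed square suffices), and it also yields strictness of the monotonicity when $b_2\neq 0$, which is not needed for Proposition~\ref{proposition2} in any case.
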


\begin{remark}
We note that $\cov_2=0$ iff, of the two BDEs $(X_1,Y_1)$ and $(X_2,Y_2)$, either one is extreme or both are pure.
We note this only for completeness but do not use it in the paper.
\end{remark}

The rest of the section is devoted to the proof of the above propositions.

\subsection{Proof of Proposition~\ref{proposition1}}\label{Sect:proofofproposition1}
For $p,q \in [0,1]$, define  
\begin{align}\label{covarianceformula1}
f(p,q) & \defn (p \ast q)(p\ast \overline{q}) 
\log \left(\frac{p \ast q}{p \ast \overline{q}}\right) \nonumber\\
& \qquad\times
\left[\cH\left(\frac{p\, \overline{q}}{p \ast \overline{q}}\right) -
\cH\left(\frac{p\, q}{p \ast q}\right) \right].
\end{align}
We will soon give a formula for $\cov_1$ in terms of this function.
First, a number of properties of $f(p,q)$ will be listed.
The following symmetry properties are immediate:
\begin{equation}\label{symmetry1}
f(p,q) = f(\overline{p},q)=f(\overline{p},q) = f(\overline{p},\overline{q}),
\end{equation}
\begin{equation}\label{symmetry2}
f(p,q) = f(q,p).
\end{equation}

\begin{lemma}\label{Cpq}
We have $f(p,q)\ge 0$ for all $p,q\in [0,1]$ with equality iff $p \in \{0,1/2,1\}$ or $q\in \{0,1/2,1\}$.
\end{lemma}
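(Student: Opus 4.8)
Looking at Lemma~\ref{Cpq}, I need to prove $f(p,q) \ge 0$ with the stated equality conditions, where
\[
f(p,q) = (p \ast q)(p\ast \overline{q}) \log \left(\frac{p \ast q}{p \ast \overline{q}}\right)\left[\cH\left(\frac{p\, \overline{q}}{p \ast \overline{q}}\right) - \cH\left(\frac{p\, q}{p \ast q}\right) \right].
\]

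Let me analyze the structure. I have a product of three factors: the prefactor $(p \ast q)(p\ast \overline{q})$, which is always nonnegative; a logarithm term; and a difference of binary entropies.

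Let me denote $r = \frac{p\,q}{p \ast q}$ and $s = \frac{p\,\overline{q}}{p \ast \overline{q}}$. I need to understand the signs of the two nontrivial factors, the log term and $\cH(s) - \cH(r)$.

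The key idea: show the log factor and the entropy-difference factor always have matching signs (or one vanishes), so their product is nonnegative, and the prefactor never flips the sign.

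For the log factor, $\log\frac{p\ast q}{p\ast\overline{q}}$ is positive, negative, or zero according to whether $p\ast q$ exceeds, is less than, or equals $p \ast \overline{q}$. Since $p \ast q = pq + \overline{p}\,\overline{q}$ and $p\ast\overline{q} = p\overline{q}+\overline{p}q$, their difference factors as $(p\ast q)-(p\ast\overline{q}) = (2p-1)(2q-1)$. So the sign of the log factor is the sign of $(2p-1)(2q-1)$, i.e., it is determined by whether $p$ and $q$ lie on the same side of $1/2$.

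For the entropy difference, I use the monotonicity structure noted in the $\beta$-representation discussion: $\cH$ is monotone on $[0,1/2]$ and symmetric about $1/2$, so $\cH(s) \ge \cH(r)$ iff $\min(s,\overline{s}) \ge \min(r,\overline{r})$, i.e., iff the $\beta$-value of $s$ dominates that of $r$.

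Here is the plan. First I would reduce the analysis using the symmetries \eqref{symmetry1} and \eqref{symmetry2}: since $f(p,q)=f(\overline{p},q)=f(p,\overline{q})=f(\overline{p},\overline{q})$ and $f(p,q)=f(q,p)$, I may assume without loss of generality that $p,q \in [0,1/2]$ and, say, $p \le q$. This collapses the case analysis considerably. Under this restriction $(2p-1)(2q-1) \ge 0$, so the log factor is nonnegative; equality holds precisely when $p=1/2$ or $q=1/2$. Since I will establish nonnegativity, I then only need to check that the entropy-difference factor is also nonnegative in this regime (and determine when it vanishes).

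Second, I would compare the arguments $r = pq/(p\ast q)$ and $s = p\overline{q}/(p\ast\overline{q})$ via their $\beta$-values. The cleaner route is to exploit the $\beta$-evolution already derived: $r$ and $s$ are exactly the two branches of the output $\alpha$-parameter $\Ap$ in \eqref{eq:rel2}, and the monotonicity of $\cH$ on $[0,1/2]$ means I should compare $\min(r,\overline r)$ with $\min(s,\overline s)$. With $p,q\in[0,1/2]$ fixed I would show by direct manipulation that $\min(s,\overline s)\le \min(r,\overline r)$, whence $\cH(s)\le\cH(r)$, so the bracketed factor is $\le 0$. Combined with the nonnegative log factor, the product of the two signed factors is $\le 0$; but note the overall $f$ carries the product of (nonnegative log) times (nonpositive bracket), giving $f \le 0$ — so I must recheck the sign convention in my reduction. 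The careful bookkeeping of which branch is larger under $p\le q\le 1/2$ is where sign errors creep in, so I would verify it on a boundary value (e.g.\ $q=1/2$, forcing $r=s=p$ and $f=0$) to fix orientation.

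Third, the equality analysis. From the reduced picture, $f(p,q)=0$ exactly when one of the two factors vanishes. The log factor vanishes iff $p=1/2$ or $q=1/2$; the entropy-difference factor vanishes iff $r$ and $s$ have equal $\beta$-values, which I expect to force $p\in\{0,1/2\}$ or $q\in\{0,1/2\}$ after unwinding the symmetry reductions back to the full square. The prefactor $(p\ast q)(p\ast\overline q)$ vanishes only at the corner points, already covered by $\{0,1\}$. Reinstating the four-fold symmetry, these conditions become $p\in\{0,1/2,1\}$ or $q\in\{0,1/2,1\}$, matching the claim.

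The main obstacle will be the comparison of $\min(r,\overline r)$ and $\min(s,\overline s)$ together with nailing the correct sign of the entropy difference, since both $r$ and $s$ depend nonlinearly on $(p,q)$ through the convolutions. Rather than differentiate $\cH$, I expect the clean argument is to show directly that $s \le r \le \overline r \le \overline s$ (or the appropriate chain) on the reduced region $0\le p\le q\le 1/2$, so that monotonicity of $\cH$ on $[0,1/2]$ and its symmetry immediately give the entropy ordering without any calculus. Verifying that algebraic chain of inequalities — essentially that the two branch outputs straddle the symmetric comparison correctly — is the one genuinely fiddly computation, but it is elementary once the convolution identities $p\ast q - p\ast\overline q = (2p-1)(2q-1)$ and $pq\cdot(p\ast\overline q) - p\overline q\cdot(p\ast q)$ are expanded.
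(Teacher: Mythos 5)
Your proof skeleton is the same as the paper's: reduce by the symmetries \eqref{symmetry1}--\eqref{symmetry2} to $p,q\in[0,1/2]$ with $p\le q$, show the logarithmic factor is nonnegative there (your identity $(p\ast q)-(p\ast\overline{q})=(2p-1)(2q-1)$ is exactly the right observation), and control the bracketed factor via monotonicity of $\cH$ on $[0,1/2]$. But there is a genuine gap at the crux: you state the comparison of the two entropy arguments in the wrong direction, and your plan for resolving the sign cannot work. Writing $r=pq/(p\ast q)$ and $s=p\overline{q}/(p\ast\overline{q})$, the correct chain on $0\le p\le q\le 1/2$ is $r\le s\le \overline{s}\le\overline{r}$, equivalently $r\le s\le 1/2$ (this is the paper's \eqref{conditionsrs}), which gives $\cH(s)-\cH(r)\ge 0$ and hence $f\ge 0$. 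Both concrete orderings you propose, first $\min(s,\overline{s})\le\min(r,\overline{r})$ and later $s\le r\le\overline{r}\le\overline{s}$, are exactly backwards: at $p=q=1/4$ one computes $p\ast q=5/8$, $p\ast\overline{q}=3/8$, $r=1/10$, $s=1/2$, so $\min(r,\overline{r})=1/10<1/2=\min(s,\overline{s})$, and indeed $f(1/4,1/4)=(15/64)\log(5/3)\bigl[1-\cH(1/10)\bigr]>0$. You noticed that your orientation would yield $f\le 0$ and flagged it, but the repair you propose---testing at $q=1/2$, where $r=s=p$---is a point at which the log factor and the bracket \emph{both} vanish, so it cannot distinguish the two orientations; you would need an interior test point such as $p=q=1/4$, or better, the two-line algebraic proof of the correct chain: $p\overline{q}/(p\ast\overline{q})\le 1/2$ iff $p\overline{q}\le\overline{p}q$ iff $p\le q$, and $pq/(p\ast q)\le p\overline{q}/(p\ast\overline{q})$ reduces after cross-multiplication to $p\overline{p}q^2\le p\overline{p}\,\overline{q}^2$, i.e.\ $q\le 1/2$.

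The equality characterization is likewise asserted rather than proven (``which I expect to force\ldots''), and since it rests on knowing exactly when the bracket vanishes, it inherits the same unresolved step. The paper settles it factor by factor: $p\ast q=0$ only at $(p,q)\in\{(0,1),(1,0)\}$, $p\ast\overline{q}=0$ only at $(p,q)\in\{(0,0),(1,1)\}$, the log term vanishes iff $p=1/2$ or $q=1/2$, and the entropy difference vanishes iff $pq/(p\ast q)$ equals $p\overline{q}/(p\ast\overline{q})$ or its complement, which holds iff $p\in\{0,1/2,1\}$ or $q\in\{0,1/2,1\}$; intersecting these conditions gives the stated equality set. With the corrected inequality chain in hand, finishing your argument along these lines is routine, but as written the proposal does not yet constitute a proof---taken at face value it would ``prove'' $f\le 0$.
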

\begin{proof}
We use \eqref{symmetry1} to write
\begin{equation}\label{symmetry3}
f(p,q) = f(r,s)
\end{equation}
where $r\defn \min\{p,\overline{p}\}$ and $s\defn \min\{q,\overline{q}\}$.
Thus, instead of proving $f(p,q)\ge 0$, it suffices to prove $f(r,s)\ge 0$ for $0\le r,s\le 1/2$.
In fact, using \eqref{symmetry2}, it suffices to prove $f(r,s)\ge 0$ for $0\le r\le s \le 1/2$.
Assuming $0\le r\le s \le 1/2$, it is straightforward to show that
\begin{equation}\label{conditionsrs}
r\ast s \ge r\ast \overline{s}  \quad \text{and}\quad
\frac{rs}{r\ast s} \,\le\, 
\frac{r\,\overline{s}}{r\ast \overline{s}} \, \le \,
\frac12.
\end{equation}
Thus, if we write out the expression for $f(r,s)$, as in \eqref{covarianceformula1} with $(r,s)$ in place of $(p,q)$, we can see easily that each of the four factors on the right hand side of that expression are non-negative. More specifically, the logarithmic term is non-negative due to the first inequality in \eqref{conditionsrs} and the bracketed term is non-negative due to the second inequality in \eqref{conditionsrs}.
This completes the proof that $f(p,q)\ge 0$ for all $p,q\in [0,1]$.

Next, we identify the necessary and sufficient conditions for $f(p,q)$ to be zero over $0\le p,q\le 1$.
Clearly, $f(p,q) = 0$ iff one of the four factors on the right hand side of \eqref{covarianceformula1} equals zero.
By straightforward algebra, one can verify the following statements. 
The first factor $p\ast q$ equals zero iff $(p,q)\in \{(0,1),(1,0)\}$.
The second factor $p\ast \overline{q}$ equals zero iff $(p,q)\in \{(0,0),(1,1)\}$.
The log term equals zero iff $p = 1/2$ or $q=1/2$.
Finally the difference of the entropy terms equals zero iff $pq/p\ast q = p\overline{q}/p\ast \overline{q}$ or $pq/p\ast q =1- p\overline{q}/p\ast \overline{q}$ which in turn is true iff $p \in \{0,1/2,1\}$ or $q \in \{0,1/2,1\}$.
Taking the logical combination of these conditions we conclude that $f(p,q) = 0$ iff $p \in \{0,1/2,1\}$ or $q \in \{0,1/2,1\}$. 
\end{proof}

\begin{lemma}\label{cov1formula}
We have  
\begin{equation}\label{formula1}
\cov_1 = \E f(\bA) = \E f(\bB).
\end{equation}
\end{lemma}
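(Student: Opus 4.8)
The plan is to evaluate the inner conditional covariance $\cov_{\bX|\bY}(\hm,\hp)$ pointwise for each fixed value of $\bY=(y_1,y_2)$, identify it with $f(a_1,a_2)$ where $a_i=\alpha_{X_i|Y_i}(y_i)$, and then average over $\bY$; a short symmetry argument based on \eqref{symmetry1} upgrades the resulting $\E f(\bA)$ to $\E f(\bB)$.

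First I would fix $\bY=(y_1,y_2)$ and write $p=a_1$, $q=a_2$. Because $(X_1,Y_1)$ and $(X_2,Y_2)$ are independent, conditioned on $\bY$ the pair $(X_1,X_2)$ is a product of two Bernoulli variables, so the four outcomes of $(U_1,U_2)=(X_1\oplus X_2,X_2)$ carry conditional probabilities $pq$, $\overline p\,\overline q$, $\overline p q$, and $p\overline q$. From these I would read off the two entropy random variables. The marginal of $U_1$ is $(p\ast q,\,p\ast\overline q)$, so $\hm=-\log(p\ast q)$ when $U_1=0$ and $\hm=-\log(p\ast\overline q)$ when $U_1=1$; in particular $\hm$ is a function of $U_1$ alone. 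The variable $\hp=-\log p_{U_2|U_1,\bY}$ takes the four values obtained by normalising the conditional probabilities above by the appropriate $U_1$-marginal.

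The key simplification is that, since $\hm$ is $U_1$-measurable, $\cov_{\bX|\bY}(\hm,\hp)=\cov_{\bX|\bY}\bigl(\hm,\E_{\bX|\bY}[\hp\mid U_1]\bigr)$, a covariance of two functions on the two-point space of $U_1$. Here $\E_{\bX|\bY}[\hp\mid U_1=u_1]$ is just the conditional entropy of $U_2$ given $\{U_1=u_1,\bY\}$, so $\E_{\bX|\bY}[\hp\mid U_1=0]=\cH\!\left(\frac{pq}{p\ast q}\right)$ and, after using $\cH(x)=\cH(1-x)$, $\E_{\bX|\bY}[\hp\mid U_1=1]=\cH\!\left(\frac{p\overline q}{p\ast\overline q}\right)$. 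Applying the elementary two-point identity $\cov(Z,W)=\pi_0\pi_1(z_0-z_1)(w_0-w_1)$ with $\pi_0=p\ast q$ and $\pi_1=p\ast\overline q$ then produces precisely the product of the four factors defining $f(p,q)$ in \eqref{covarianceformula1}; the sign reversal in the logarithmic factor cancels against the sign reversal in the bracketed entropy difference, so the pointwise value is exactly $f(a_1,a_2)$, not its negative. Averaging over $\bY$ gives $\cov_1=\E_{\bY}f(A_1,A_2)=\E f(\bA)$.

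For the second equality I would invoke the symmetries \eqref{symmetry1}: since $f(p,q)=f(\overline p,q)=f(p,\overline q)=f(\overline p,\overline q)$, the value $f(A_1,A_2)$ is unchanged when each $A_i$ is replaced by $\min\{A_i,\overline{A_i}\}=B_i$, so $f(\bA)=f(\bB)$ holds pointwise as random variables and hence in expectation. I expect the only real difficulty to be bookkeeping: correctly enumerating the four conditional probabilities and the four values of $\hp$, and tracking the two sign flips so that the computed covariance lands on $f$ as written in \eqref{covarianceformula1} rather than $-f$.
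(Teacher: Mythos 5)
Your proposal is correct and follows essentially the same route as the paper: both fix $\bY$, use the tower property to reduce the conditional covariance to a computation on the two-point space of $U_1$ (your explicit two-point identity $\cov(Z,W)=\pi_0\pi_1(z_0-z_1)(w_0-w_1)$ is just a cleaner packaging of the paper's ``after some algebra'' step), identify the result with $f(a_1,a_2)$, take expectations, and pass from $\E f(\bA)$ to $\E f(\bB)$ via the symmetries \eqref{symmetry1}. The sign bookkeeping you flag is handled correctly, so no gap remains.
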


\begin{proof}
Fix a sample $\by=(y_1,y_2)$.
Note that
\begin{align*}
&\cov_{\bX|\by}(\hm,\hp) = \cov_{\bX|\by}(h(U_1|\by),h(U_2|U_1,\by))\\
&\quad = \E_{\bX|\by}\big\{\big[h(U_1|\by)-H(U_1|\by)\big]h(U_2|U_1,\by)\big\}\\
&\quad = \sum_{u_1} p_{U_1|\bY}(u_1|\by) \big[h(u_1|\by)-H(U_1|\by)\big] H(U_2|u_1,\by).
\end{align*}
After some algebra, the term $\big[h(u_1|\by)- H(U_1|\by)\big]$ simplifies to 
$$
(1-p_{U_1|\bY}(u_1|\by))\log \frac{1-p_{U_1|\bY}(u_1|\by)}{p_{U_1|\bY}(u_1|\by)}.
$$ 
Substituting this in the preceding equation and writing out the sum over $U_1$ explicitly, we obtain
\begin{align*}
&\cov_{\bX|\by}(\hm,\hp) = p_{U_1|\bY}(0|\by) p_{U_1|\bY}(1|\by)  \\
&\quad \cdot\log \frac{p_{U_1|\bY}(0|\by)}{p_{U_1|\bY}(1|\by)} \big[H(U_2|U_1=1,\by)-H(U_2|U_1=0,\by)\big].
\end{align*}
Expressing each factor on the right side of the above equation in terms of $a_i = \alpha(y_i)$, $i=1,2$, we see
that it equals $f(a_1,a_2)$.
Taking expectations, we obtain $\cov_1 = \E f(\bA)$.
The alternative formula $\cov_1 = \E f(\bB)$ follows from the fact that $f(\bB) = f(\bA)$ due to the symmetries \eqref{symmetry1}.
\end{proof}

Proposition~\ref{proposition1} now follows readily.
We have $\cov_1\ge 0$ since $f(a_1,a_2)\ge 0$ for all $a_1, a_2\in [0,1]$ by Lemma~\ref{Cpq}.
By the same lemma, strict positivity, $\E f(\bA)>0$, is possible iff the events 
$A_1\notin \{0,1/2,1\}$ and $A_2\notin \{0,1/2,1\}$ can occur simultaneously with non-zero probability,
{\sl i.e.\/}, iff
\begin{equation}\label{condition1}
P\bigg(A_1\notin \{0,\frac12,1\}\bigg)\;P\bigg(A_2\notin \{0,\frac12,1\}\bigg) > 0,
\end{equation}
since $A_1$ and $A_2$ are independent. 
Condition~\eqref{condition1} is true iff  
\begin{equation}\label{condition2}
P\bigg(B_1\notin \{0,\frac12\}\bigg)\; P\bigg(B_2\notin \{0,\frac12\}\bigg) > 0,
\end{equation}
which in turn is true iff neither $B_1$ nor $B_2$ is erasing.
This completes the proof of Proposition~\ref{proposition1}.

\subsection{Proof of Proposition~\ref{proposition2}}\label{Sect:proofofproposition2}
Let $\Hm(p,q)  \defn \cH(p \ast q)$ and $\Hp(p,q)  \defn \cH(p) + \cH(q) - \cH(p \ast q)$ for $p,q\in [0,1]$.
These functions will be used to give an explicit expression for $\cov_2$.
First, we note some symmetry properties of the two functions.
For $i=1,2$, we have
\begin{equation}\label{symmetry4}
g_i(p,q) = g_i(\overline{p},q)=g_i(\overline{p},q) = g_i(\overline{p},\overline{q}),
\end{equation}
\begin{equation}\label{symmetry5}
g_i(p,q) = g_i(q,p).
\end{equation}
We omit the proofs since they are immediate.

\begin{lemma}\label{cov2formula}
We have, for $i=1,2$,
\begin{equation}\label{cov2forms12}
\E_{\bX|\bY} \hoi = g_i(\bA) = g_i(\bB).
\end{equation}
\end{lemma}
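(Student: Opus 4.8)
The plan is to recognize each inner conditional expectation as an ordinary conditional entropy evaluated at a fixed sample $\by$, and then to read off its value from the $\alpha$-representation using Proposition~\ref{prop:alpha2alpha}. I would fix a sample $\by=(y_1,y_2)$ and write $a_i=\alpha(y_i)$ throughout. The key observation is that, for fixed $\by$, the entropy random variable $\hm=h(U_1|\bY)$ is a function of $U_1$ alone (hence of $\bX$), so its conditional expectation over $\bX$ reproduces the deterministic conditional entropy
\[
\E_{\bX|\by}\hm = \sum_{u_1} p_{U_1|\bY}(u_1|\by)\bigl(-\log p_{U_1|\bY}(u_1|\by)\bigr) = H(U_1\mid \bY=\by).
\]
By the computation $\am(y_1,y_2)=a_1\ast a_2$ preceding \eqref{eq:rel1} (Proposition~\ref{prop:alpha2alpha}), we have $p_{U_1|\bY}(0|\by)=a_1\ast a_2$, so $H(U_1\mid\bY=\by)=\cH(a_1\ast a_2)=g_1(a_1,a_2)$. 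Letting $\by$ range over its samples gives $\E_{\bX|\bY}\hm=g_1(\bA)$, which is the claim for $i=1$.

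For the second coordinate I would avoid computing $H(U_2\mid U_1,\bY=\by)$ directly and instead apply $\E_{\bX|\by}$ to the conservation identity \eqref{conservation1}. Since for fixed $\by$ the variable $\him=h(X_1|Y_1)$ is a function of $X_1$, its inner expectation is $\E_{\bX|\by}\him=H(X_1\mid Y_1=y_1)=\cH(a_1)$, and likewise $\E_{\bX|\by}\hip=\cH(a_2)$. Taking $\E_{\bX|\by}$ of both sides of \eqref{conservation1} therefore yields
\[
\E_{\bX|\by}\hm + \E_{\bX|\by}\hp = \cH(a_1) + \cH(a_2).
\]
Subtracting the first term $\cH(a_1\ast a_2)$ already identified above gives $\E_{\bX|\by}\hp = \cH(a_1)+\cH(a_2)-\cH(a_1\ast a_2)=g_2(a_1,a_2)$, and hence $\E_{\bX|\bY}\hp=g_2(\bA)$. (Equivalently, one could invoke the chain rule together with the fact that $(U_1,U_2)$ is in one-to-one correspondence with $(X_1,X_2)$ for fixed $\by$, plus the independence of $Y_1,Y_2$, to reach the same pair of identities.)

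Finally, the equality $g_i(\bA)=g_i(\bB)$ follows pointwise from the symmetry relations \eqref{symmetry4}: since $B_i=\min\{A_i,\overline{A}_i\}$ equals either $A_i$ or $\overline{A}_i$, and since $g_i$ is invariant under replacing either of its arguments by its complement, we get $g_i(A_1,A_2)=g_i(B_1,B_2)$. This completes \eqref{cov2forms12}.

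I do not expect a genuine obstacle here, as the argument is essentially bookkeeping with conditional entropies. The only two points requiring care are the opening identification — that the inner conditional expectation of the entropy random variable $h(U_1|\bY)$ collapses to the deterministic quantity $H(U_1\mid\bY=\by)$ rather than a random object — and the clean use of \eqref{conservation1} under $\E_{\bX|\by}$, which sidesteps any direct manipulation of $H(U_2\mid U_1,\bY=\by)$.
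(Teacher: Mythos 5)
Your proposal is correct and follows essentially the same route as the paper: identify $\E_{\bX|\bY}\hm$ with $\cH(A_1\ast A_2)=g_1(\bA)$ via Proposition~\ref{prop:alpha2alpha}, obtain the $\hp$ term by applying the inner expectation to the conservation identity \eqref{conservation1} rather than computing $H(U_2|U_1,\bY=\by)$ directly, and pass from $\bA$ to $\bB$ by the symmetries \eqref{symmetry4}. The only difference is presentational — you work pointwise at a fixed sample $\by$, which makes the bookkeeping slightly more explicit than the paper's compact operator notation.
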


\begin{proof}
These results follow from \eqref{eq:entropyA}, \eqref{eq:rel1}, and \eqref{eq:rel2}.
We compute $\E_{\bX|\bY} \hm$ as follows.
\begin{align*}
\E_{\bX|\bY} \hm  &= \E_{\bU|\bA} \hm = \cH(A_1\ast  A_2) = \Hm(\bA).
\end{align*}
For the second term, we use the entropy conservation \eqref{conservation1}.
\begin{align*}
\E_{\bX|\bY} \hp &= \E_{\bX|\bY} \him + \E_{\bX|\bY} \hip - \E_{\bX|\bY} \hm \\
& = \cH(A_1) + \cH(A_2) - \cH(A_1\ast  A_2) = \Hp(\bA). 
\end{align*}
The second form of the formulas in terms of $\bB$ follow from the symmetry properties \eqref{symmetry4}.
\end{proof}

As a corollary to Lemma~\ref{cov2formula}, we now have
\begin{equation}\label{cov2form2}
\cov_2 = \cov[\Hm(\bB),\Hp(\bB)].
\end{equation}

In order to prove that $\cov_2\ge 0$, we will apply Lemma~\ref{lemma:chebyshev} to \eqref{cov2form2}.
First, we need to establish some monotonicity properties of the functions $\Hm$ and $\Hp$.
We insert here a general definition.
\begin{definition}\label{def:increasing}
A function $g:\R^n \to \R$ is called {\sl nondecreasing} if, for all 
$\bx,\by\in \R^n$, $g(\bx)\le g(\by)$ whenever $x_i\le y_i$ for all $i=1,\ldots,n$.
\end{definition}

\begin{lemma}\label{cov2monotonicity}
$\Hm:[0,1/2]^2 \to \R^+$ is nondecreasing.
\end{lemma}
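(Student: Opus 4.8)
The goal is to show that $\Hm(p,q)=\cH(p\ast q)$ is nondecreasing on the domain $[0,1/2]^2$, meaning it is nondecreasing in each argument separately (holding the other fixed). The plan is to reduce the two-variable monotonicity to the elementary monotonicity of the binary entropy function $\cH$ combined with monotonicity of the convolution $p\ast q = pq+\overline{p}\,\overline{q}$ as a function of each coordinate.

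First I would exploit the composition structure. Since $\cH$ is nondecreasing on $[0,1/2]$ and symmetric about $1/2$ (i.e. $\cH(t)=\cH(\overline{t})$), the composite $\cH(p\ast q)$ will inherit monotonicity from the inner function $p\ast q$ provided I can control where $p\ast q$ lands. The key sub-claim is that for $p,q\in[0,1/2]$ the convolution satisfies $p\ast q\ge 1/2$, equivalently $\gamma(p\ast q)=\min\{p\ast q,\overline{p\ast q}\}=\overline{p\ast q}=\overline{p}\ast q$. Indeed, writing $p\ast q - \overline{p\ast q} = (2p-1)(2q-1)\cdot\tfrac12\cdots$ — more cleanly, a short computation gives $p\ast q-1/2 = 2(\tfrac12-p)(\tfrac12-q)\ge 0$ when both factors are nonnegative, which holds precisely because $p,q\le 1/2$. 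So on the whole domain $p\ast q$ sits in $[1/2,1]$.

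Next I would differentiate the inner map in each coordinate: $\partial_p(p\ast q)=q-\overline{q}=2q-1\le 0$ for $q\in[0,1/2]$, and by symmetry $\partial_q(p\ast q)=2p-1\le 0$ for $p\in[0,1/2]$. Thus $p\ast q$ is nonincreasing in each argument on $[0,1/2]^2$. Since $p\ast q$ takes values in $[1/2,1]$, and $\cH$ is nonincreasing on $[1/2,1]$ (the mirror image of its being nondecreasing on $[0,1/2]$), the composition $\cH(p\ast q)$ is the composition of a nonincreasing outer function with a nonincreasing inner function in each coordinate, hence nondecreasing in each coordinate. Because the map is coordinatewise nondecreasing, it is nondecreasing in the sense of Definition~\ref{def:increasing}. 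The strict inequalities at the boundary do not matter for the weak (nondecreasing) claim, so I can argue purely with signs of the relevant quantities rather than invoking differentiability; a difference-based argument (comparing $\cH(p\ast q)$ at $p\le p'$ via the monotonicity of $\gamma(p\ast q)=\overline{p}\ast q$ in $p$) would avoid any smoothness technicalities entirely.

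The only real obstacle is bookkeeping the two reflections at once: one must simultaneously track that the inner convolution decreases while the outer entropy also decreases past $1/2$, so that the two sign reversals cancel to yield an increase. I expect the cleanest writeup to phrase everything through $\gamma(p\ast q)=\overline{p}\ast q$, noting that $\overline{p}\ast q = \overline{p}\,q+p\,\overline{q}$ is manifestly nondecreasing in $p$ (coefficient $\overline{q}-q=2\overline{q}-1\ge 0$ for $q\le 1/2$) and, by the symmetry \eqref{symmetry4}, nondecreasing in $q$; then $\Hm=\cH(\gamma(p\ast q))$ with $\cH$ nondecreasing on $[0,1/2]$ closes the argument in one line. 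I do not anticipate any genuine difficulty beyond arranging this reduction carefully.
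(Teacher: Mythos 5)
Your proof is correct and follows essentially the same route as the paper's: reduce to monotonicity in one coordinate by symmetry, observe that $p\ast q$ lies in $[1/2,1]$ and is nonincreasing in each argument on $[0,1/2]^2$, and compose with $\cH$, which is nonincreasing on $[1/2,1]$, so the two reversals cancel. The only difference is that you make explicit (via the identity $p\ast q-\tfrac12 = 2(\tfrac12-p)(\tfrac12-q)$ and the derivative $2q-1$) what the paper dismisses as ``readily verified,'' which is a fine, slightly more self-contained writeup of the same argument.
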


\begin{proof}
Since $\Hm(b_1,b_2) = \Hm(b_2,b_1)$, it suffices to show that 
$\Hm(b_1,b_2)$ is nondecreasing in $b_1\in[0,1/2]$ for fixed $b_2\in [0,1/2]$.
So, fix $b_2\in [0,1/2]$ and consider $\Hm(b_1,b_2)$ as a function of $b_1\in [0,1/2]$.
Recall the well-known facts that the function $\cH(p)$ over $p\in [0,1]$ is a strictly concave non-negative function, symmetric around $p=1/2$, attaining its minimum value of 0 at $p\in \{0,1\}$, and its maximum value of $1$ at $p=1/2$. 
It is readily verified that, for any fixed $b_2\in [0,1/2]$, as $b_1$ ranges from 0 to $1/2$, $b_1\ast b_2$ decreases from $\overline{b}_2$ to $1/2$,
hence $\Hm(b_1,b_2) = \cH(b_1\ast b_2)$ increases from $\cH(\overline{b}_2)$ to $\cH(1/2) = 1$, with strict monotonicity if $b_2\neq 1/2$.
This completes the proof.
\end{proof}

\begin{lemma}\label{cov2monotonicity2}
$\Hp:[0,1/2]^2 \to \R^+$ is nondecreasing.
\end{lemma}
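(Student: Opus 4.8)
The plan is to exploit the symmetry \eqref{symmetry5}, $\Hp(p,q)=\Hp(q,p)$, to reduce the claim to a one-variable statement: it suffices to fix $q\in[0,1/2]$ and show that $\Hp(p,q)$ is nondecreasing in $p\in[0,1/2]$. Since $\cH(q)$ is constant in $p$, this amounts to showing that $\phi(p)\defn \cH(p)-\cH(p\ast q)$ is nondecreasing on $[0,1/2]$. The fact I would record first is the identity $p\ast q=\tfrac12\bigl[1+(1-2p)(1-2q)\bigr]$, which shows that $p\ast q\ge 1/2$ throughout $[0,1/2]^2$ and which also motivates the change of variables used below.

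Next I would differentiate. Using $\cH'(s)=\log(\overline{s}/s)$ and $\tfrac{d}{dp}(p\ast q)=2q-1$, one gets
\begin{equation*}
\phi'(p)=\log\frac{\overline{p}}{p}-(1-2q)\,\log\frac{p\ast q}{\overline{p\ast q}}.
\end{equation*}
On $[0,1/2]^2$ both logarithms are nonnegative (for the second this uses $p\ast q\ge 1/2$), so the sign of $\phi'$ is genuinely in question and is the crux of the argument: I must show that the first term dominates the second. Substituting $u\defn 1-2p\in[0,1]$ and $v\defn 1-2q\in[0,1]$ turns the two ratios into $\overline{p}/p=(1+u)/(1-u)$ and $(p\ast q)/\overline{p\ast q}=(1+uv)/(1-uv)$, so that, after cancelling the common base-conversion constant, $\phi'(p)\ge 0$ becomes equivalent to $\operatorname{artanh}(u)\ge v\,\operatorname{artanh}(uv)$.

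This last inequality is where the real work lies, and I would settle it with the auxiliary function $g(t)\defn \operatorname{artanh}(t)/t=\sum_{k\ge 0} t^{2k}/(2k+1)$, which has nonnegative power-series coefficients and is therefore nondecreasing on $[0,1)$. Writing $\operatorname{artanh}(u)=u\,g(u)$ and $v\,\operatorname{artanh}(uv)=uv^2 g(uv)$, the desired inequality reduces (after dividing by $u$) to $g(u)\ge v^2 g(uv)$, which holds because $v^2\le 1$ and $g(uv)\le g(u)$ (since $uv\le u$ and $g$ is nondecreasing). Hence $\phi'\ge 0$ on the interior, $\phi$ is nondecreasing, and the boundary cases $p\in\{0,1/2\}$ or $q\in\{0,1/2\}$ follow by continuity (indeed $\Hp(p,0)\equiv 0$ and $\Hp(p,1/2)=\cH(p)$). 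By symmetry the same conclusion holds in $q$, so $\Hp$ is nondecreasing, as claimed. The main obstacle is precisely the comparison $\operatorname{artanh}(u)\ge v\,\operatorname{artanh}(uv)$; the monotone-$g$ trick is what makes it painless, whereas a direct calculus attack on $\phi'$ is delicate near $p=1/2$, where both competing terms vanish simultaneously.
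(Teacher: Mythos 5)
Your proof is correct, but it takes a genuinely different route from the paper's. Both arguments use the symmetry $\Hp(p,q)=\Hp(q,p)$ to reduce to monotonicity in one variable, and both isolate the non-constant part $I(p)\defn\cH(p\ast q)-\cH(p)$ (you work with its negative $\phi$). At that point the paper avoids calculus altogether: it recognizes $I(p)$ as the mutual information of a BSC with crossover probability $p$ and a Bernoulli-$q$ input, invokes the classical fact (cited from Gallager) that mutual information is convex in the channel transition probabilities for a fixed input distribution, and then notes that a nonnegative convex function on $[0,1/2]$ that vanishes at the right endpoint ($I(1/2)=0$) must be nonincreasing. You instead differentiate, substitute $u=1-2p$, $v=1-2q$ to turn the sign question for $\phi'$ into the inequality $\operatorname{artanh}(u)\ge v\operatorname{artanh}(uv)$, and settle that with the power series of $g(t)=\operatorname{artanh}(t)/t$, whose coefficients are nonnegative. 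What your approach buys is self-containment: no appeal to an external convexity theorem, only elementary calculus and a series expansion; your explicit identity $p\ast q=\tfrac12[1+(1-2p)(1-2q)]$ also makes the structure of the problem transparent. What the paper's approach buys is brevity and an information-theoretic interpretation that explains why the lemma is true. Your treatment of the boundary (continuity at $p\in\{0,1/2\}$, the degenerate cases $\Hp(p,0)\equiv 0$ and $\Hp(p,1/2)=\cH(p)$) is careful and correct, as is the implicit reduction from coordinatewise monotonicity to the two-variable notion of nondecreasing in the paper's Definition~\ref{def:increasing}.
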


\begin{proof}
Again, since $\Hp(b_1,b_2) = \Hp(b_2,b_1)$, it suffices to show that 
$\Hp(b_1,b_2)$ is nondecreasing in $b_1\in[0,1/2]$ for fixed $b_2\in [0,1/2]$.
Recall that $\Hp(b_1,b_2)=\cH(b_1)+\cH(b_2)-\cH(b_1\ast b_2)$.
Exclude the constant term $\cH(b_2)$ and focus on the behavior of $I(b_1) \defn \cH(b_1\ast b_2)-\cH(b_1)$ over $b_1\in[0,1/2]$. 
Observe that $I(b_1)$ is the mutual information between the input and output terminals of a BSC with crossover probability $b_1$ and a Bernoulli-$b_2$ input.
The mutual information between the input and output of a discrete memoryless channel is a convex function of the set of channel transition probabilities for any fixed input probability assignment \cite[p.~90]{GallagerBook}. So, $I(b_1)$ is convex in $b_1\in[0,1/2]$. 
Since $I(0)=\cH(b_2)$ and $I(1/2)=0$, it follows from the convexity property that $I(b_1)$ is decreasing in $b_1\in [0,1/2]$, and strictly decreasing if $b_2\neq 0$.
This completes the proof.
\end{proof}

Proposition~\ref{proposition2} can now be proved as follows.
First, we apply Lemma~\ref{lemmacovariancedecomposition} to \eqref{cov2form2} to decompose $\cov_2$ as 
\begin{align*}
\cov(\Hm(\bB),\Hp(\bB)) & = \E_{B_1} \cov_{B_2} (\Hm(\bB),\Hp(\bB))\\
& + \cov_{B_1} (\E_{B_2}\Hm(\bB),\E_{B_2}\Hp(\bB)).
\end{align*}
Each covariance term on the right side is positive by Chebyshev's correlation inequality (Lemma~\ref{lemma:chebyshev}) 
and the fact that $\Hm$ and $\Hp$ are nondecreasing in the sense of Def.~\ref{def:increasing}.
More specifically, Chebyshev's inequality implies that
$$
\cov_{B_2} (\Hm(b_1,B_2),\Hp(b_1,B_2))\ge 0
$$
for any fixed $b_1\in [0,1/2]$ since $\Hm(b_1,b_2)$ and $\Hp(b_1,b_2)$ are nondecreasing functions of $b_2$ when $b_1$ is fixed.
Likewise, Chebyshev's inequality implies that
$$
\cov_{B_1} (\E_{B_2}\Hm(\bB),\E_{B_2}\Hp(\bB)) \ge 0
$$
since $\E_{B_2}\Hm(b_1,B_2)$ and $\E_{B_2}\Hp(b_1,B_2)$ are, as a simple consequence of Lemma~\ref{cov2monotonicity2}, nondecreasing functions of $b_1$.

\subsection{Proof of Theorem~\ref{theoremvarentropy2}}

The covariance inequality \eqref{covarianceinequality} is an immediate consequence of 
\eqref{covar} and Propositions~\ref{proposition1} and \ref{proposition2}.
We only need to identify the necessary and sufficient conditions for the covariance to be zero.
For brevity, let us define 
$$
T\defn \text{``$B_1$ or $B_2$ is extreme''}.
$$
The present goal is to prove that
\begin{equation}\label{eq:goal}
\text{$\cov(\hm,\hp)=0\quad$ iff $\quad T$ holds.}
\end{equation}
The proof will make us of the decomposition
\begin{align}\label{eq:covB}
\cov(\hm,\hp)& =\cov_1 + \cov_2\nonumber\\
& = \E f(\bB) + \cov(\Hm(\bB),\Hp(\bB))
\end{align}
that we have already established.
Let us define 
$$
R\defn \text{``$B_1$ or $B_2$ is erasing''}
$$ 
and note that $R$ appears in Proposition~\ref{proposition1} as the necessary and sufficient conditions for $\cov_1$ to be zero.
Note also that $T$ implies $R$ since ``extreme'' is a special instance ``erasing'' according to definitions in Table~\ref{table:classification}.

We begin the proof of \eqref{eq:goal} with the sufficiency part.
in other words, by assuming that $T$ holds. 
Since $T$ implies $R$, $T$ is sufficient for $\cov_1=0$. 
To show that $T$ is sufficient for $\cov_2=0$, we recall Proposition~\ref{prop:polarextreme},
which states that, if $T$ is true, then either $\Bm$ or $\Bp$ is extreme.
To be more specific, if $\Bim$ or $\Bip$ is p.r., then $\Bm\equiv 1/2$ and $\Hm(\bB)\equiv 1$;
if $\Bim$ or $\Bip$ is perfect, then $\Bp\equiv 0$ and $\Hp(\bB)\equiv 0$. (The notation ``$\equiv$'' should be read as ``equals with probability one''.) 
In either case, $\cov_2=\cov(\Hm(\bB),\Hp(\bB))=0$.
This completes the proof of the sufficiency part.

To prove necessity in \eqref{eq:goal}, 
we write $T$ as
\begin{equation}\label{eq:T}
T = R\wedge (R^c\vee T)
\end{equation}
where $R^c$ denotes the complement (negation) of $R$.
The validity of \eqref{eq:T} follows from $R\wedge T = T$.
To prove necessity, we will use contraposition
and show that $T^c$ implies $\cov(\hm,\hp)>0$.
Note that $T^c = R^c \vee (R\wedge T^c)$.
If $T^c$ is true, then either $R^c$ or $(R\wedge T^c)$ is true.
If $R^c$ is true, then $\cov_1 >0$ by Proposition~\ref{proposition1}. 
We will complete the proof by showing that $R\wedge T^c$ implies $\cov(\hm,\hp)>0$.
For this, we note that when one of the BDEs is erasing, there is an explicit formula
for $\cov_2$. We state this result as follows.

\begin{lemma}\label{lemma:BECcov2}
Let $B_1$ be erasing with erasure probability $\epsilon \defn P(B_1 = 1/2)$ and let $B_2$ be arbitrary with $\delta \defn H(X_2|Y_2)$. 
Then, 
\begin{equation}\label{eq:cov2BEC}
\cov_2 = \epsilon(1-\epsilon)\delta(1-\delta)
\end{equation}
This formula remains valid if $B_2$ is erasing with erasure probability $\epsilon \defn P(B_2 = 1/2)$ and $B_1$ is arbitrary with $\delta \defn H(X_1|Y_1)$.
\end{lemma}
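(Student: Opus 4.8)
The plan is to compute $\cov_2$ directly from its closed form \eqref{cov2form2}, namely $\cov_2 = \cov[\Hm(\bB),\Hp(\bB)]$, exploiting the fact that an erasing $B_1$ is supported on only the two points $0$ and $1/2$, with $P(B_1=1/2)=\epsilon$. First I would evaluate the two functions $\Hm(p,q)=\cH(p\ast q)$ and $\Hp(p,q)=\cH(p)+\cH(q)-\cH(p\ast q)$ at these two atoms of $B_1$, holding $B_2$ arbitrary. Using $0\ast B_2=\overline{B_2}$ and $\frac12\ast B_2=\frac12$, together with the symmetry $\cH(\overline{B_2})=\cH(B_2)$, I obtain that on the event $\{B_1=0\}$ the output pair is $(\Hm,\Hp)=(\cH(B_2),0)$, while on $\{B_1=1/2\}$ it is $(\Hm,\Hp)=(1,\cH(B_2))$.

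Next I would encode this two-valued behavior algebraically. Writing $Z\defn\cH(B_2)$, so that $\E Z=\delta$, and letting $W$ be the indicator of the event $\{B_1=1/2\}$, which is independent of $Z$ and has $\E W=\epsilon$, the two evaluations above combine into the single expressions $\Hm(\bB)=Z+W(1-Z)$ and $\Hp(\bB)=WZ$. Since $W$ is an indicator, $W^2=W$, and the product collapses: $\Hm(\bB)\,\Hp(\bB)=WZ^2+W^2Z(1-Z)=WZ$. Hence $\E[\Hm\Hp]=\E[WZ]=\epsilon\delta$ by independence, while $\E\Hp=\epsilon\delta$ and $\E\Hm=\delta+\epsilon(1-\delta)$. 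Assembling these gives $\cov_2=\epsilon\delta-\epsilon\delta\,[\delta+\epsilon(1-\delta)]=\epsilon\delta\,(1-\epsilon)(1-\delta)$, which is exactly \eqref{eq:cov2BEC}.

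Finally, the symmetric statement — that the same formula holds when $B_2$ is the erasing element and $B_1$ is arbitrary — follows at once from the symmetry properties \eqref{symmetry5}, which give $\Hm(p,q)=\Hm(q,p)$ and $\Hp(p,q)=\Hp(q,p)$, so that $\cov_2=\cov[\Hm(\bB),\Hp(\bB)]$ is invariant under interchanging the roles of the two input BDEs. I do not anticipate a genuine obstacle here, since the computation becomes elementary once $B_1$ is restricted to two points. The one point worth flagging is the simplification $\Hm\Hp=WZ$: it is precisely the idempotence $W^2=W$ that prevents the second moment $\E[\cH(B_2)^2]$ from entering the answer, which is why $\cov_2$ depends on $B_2$ only through the mean $\delta=\E\,\cH(B_2)$ and not through its varentropy.
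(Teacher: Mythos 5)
Your proof is correct and takes essentially the same route as the paper: you start from $\cov_2=\cov[\Hm(\bB),\Hp(\bB)]$, evaluate $(\Hm,\Hp)$ on the two atoms $B_1\in\{0,\tfrac12\}$ exactly as the paper does, and then carry out the covariance computation (which the paper leaves implicit), with the symmetric case handled by \eqref{symmetry5} just as in the paper. The explicit algebra via $Z=\cH(B_2)$ and the indicator $W$, including the idempotence observation $W^2=W$, is a clean way of filling in the computation the paper omits.
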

\begin{proof}
We first observe that
$$
\Hm(B_1,B_2) = \begin{cases} \cH(B_2), &  B_1 =0;\\
1, & B_1 = \frac12;
\end{cases}
$$
$$
\Hp(B_1,B_2) =  \begin{cases} 0, &  B_1 =0;\\
\cH(B_2), & B_1 = \frac12.
\end{cases}
$$
Now, the claim \eqref{eq:cov2BEC} is obtained by simply computing the covariance of these two random variables.
The second claim follows by the symmetry property \eqref{symmetry5}.
\end{proof}

Returning to the proof of Theorem~\ref{theoremvarentropy2}, the proof of the necessity part is now completed as follows.
If $R\wedge T^c$ holds, then at least one of the BDEs is {\sl strictly} erasing (has erasure probability $0 <\epsilon <1$) and the other is non-extreme.
By Proposition~\ref{prop:extreme}, the conditional entropy $H(X|Y)$ of a non-extreme BDE $(X,Y)$ is strictly between 0 and 1.
So, by Lemma~\ref{lemma:BECcov2}, we have $\cov_2>0$. 
This completes the proof.

\section{Varentropy under higher-order transforms}\label{higherorder}

In this part, we consider the behavior of varentropy under higher-order polar transforms.
The section concludes with a proof of the polarization theorem using properties of varentropy.
 
\subsection{Polar transform of higher orders}
For any $n\ge 1$, there is a polar transform of order $N=2^n$.
A polar transform of order $N=2^n$ is a mapping $\psi_N$ that takes $N$ BDEs $\{(X_i,Y_i)\}_{i=1}^N$, as input, and produces a new set of $N$
BDEs $\{(U_i; \bU^{i-1},\bY)\}_{i=1}^N$, where $\bY=(Y_1,\ldots,Y_N)$ and $\bU^{i-1}=(U_1,\ldots,U_{i-1})$ is a subvector of $\bU=(U_1,\ldots,U_N)$, which in turn
is obtained from $\bX=(X_1,\ldots,X_N)$ by the transform
\begin{equation}\label{eq:polarN}
\bU =\bX \bG_N, \qquad \bG_N \defn \bF^{\otimes n}, \qquad \bF\defn \begin{bmatrix}1 & 0 \\ 1 & 1 \end{bmatrix}.
\end{equation}
The sign ``$\otimes n$'' in the exponent denotes the $n$th Kronecker power.
We allow $Y_i$ to take values in some arbitrary set $\cY_i$, $1\le i\le N$, which is not necessarily discrete.
We assume that $(X_i,Y_i)$, $1\le i\le N$, are independent but not necessarily identically-distributed.

(An alternate form of the polar transform matrix, as used in \cite{ArikanIT2009}, is $\bG_N = \bB_N \bF^{\otimes n}$, in which $\bB_N$ is a permutation matrix known as {\sl bit-reversal}. The form of $\bG_N$ that we are using here is less complex and adequate for the purposes of this paper. 
However, if desired, the results given below can be proved under bit-reversal (or, any other permutation) after suitable re-indexing of variables.)

\subsection{Polarization results}

The first result in this section is a generalization of Theorem~\ref{varentropycontraction0} to higher order polar transforms.

\begin{theorem}\label{varentropygeneral}
Let $N=2^n$ for some $n\ge 1$. Let $(X_i,Y_i)$, $1\le i\le N$, be independent but not necessarily identically distributed BDEs.
Consider the polar transform $\bU= \bX\bG_N$ and let $(U_i;\bU^{i-1},\bY)$, $1\le i\le N$, be the BDEs at the output of the polar transform.
The varentropy is nonincreasing under any such polar transform in the sense that
\begin{equation}\label{varentropyinequality}
\sum_{i=1}^N V(U_i|\bU^{i-1},\bY) \le \sum_{i=1}^N V(X_i|Y_i).
\end{equation}
\end{theorem}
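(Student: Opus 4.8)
The plan is to argue by induction on $n$, exploiting the recursive Kronecker structure of $\bG_N$. The base case $n=1$ is precisely Theorem~\ref{theoremvarentropy}. For the inductive step I would use the factorization $\bG_N=\bF\otimes\bG_{N/2}$. Writing $\bX=(\bX^{(1)},\bX^{(2)})$ for the two halves of the input, this gives $\bU=\big((\bX^{(1)}\oplus\bX^{(2)})\bG_{N/2},\ \bX^{(2)}\bG_{N/2}\big)$. Setting $W_i\defn X_i\oplus X_{N/2+i}$ and $Z_i\defn X_{N/2+i}$ for $1\le i\le N/2$, and $\bY_i\defn(Y_i,Y_{N/2+i})$, this says: the order-$N$ transform first applies $N/2$ parallel order-two transforms to the pairs $(X_i,Y_i),(X_{N/2+i},Y_{N/2+i})$, producing a ``minus'' BDE $(W_i,\bY_i)$ and a ``plus'' BDE $(Z_i;W_i,\bY_i)$; it then applies one order-$N/2$ transform to the minus BDEs (yielding $U_1,\dots,U_{N/2}$, since $(U_1,\dots,U_{N/2})=\bW\bG_{N/2}$) and a second order-$N/2$ transform to the plus BDEs (yielding $U_{N/2+1},\dots,U_N$).

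The first point to check is that the minus family $\{(W_i,\bY_i)\}_{i=1}^{N/2}$ and the plus family $\{(Z_i;W_i,\bY_i)\}_{i=1}^{N/2}$ are each internally independent: each triple $(W_i,Z_i,\bY_i)$ is a function of $(X_i,X_{N/2+i},Y_i,Y_{N/2+i})$, and these tuples are independent across $i$ because the original $N$ BDEs are independent. This legitimizes applying the inductive hypothesis to each family. Applied to the minus family it gives $\sum_{i=1}^{N/2} V(U_i\mid\bU^{i-1},\bY)\le\sum_{i=1}^{N/2}V(W_i\mid\bY_i)$.

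The more delicate point concerns the side information for the plus family, and I expect this to be the main obstacle. The inductive hypothesis applied to the plus BDEs produces output varentropies of the form $V\big(U_{N/2+i}\mid U_{N/2+1},\dots,U_{N/2+i-1},\bW,\bY\big)$, since the common side information carried by the plus BDEs is $(\bW,\bY)$. I must match this against the term $V(U_{N/2+i}\mid\bU^{N/2+i-1},\bY)$ that actually appears in the theorem. The reconciliation is that $\bG_{N/2}$ is invertible over $\F_2$, so the block $(U_1,\dots,U_{N/2})=\bW\bG_{N/2}$ is a bijective function of $\bW$; hence conditioning on $(U_1,\dots,U_{N/2})$ is equivalent to conditioning on $\bW$, and the two conditional varentropies coincide. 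With this identification the inductive hypothesis for the plus family yields $\sum_{i=1}^{N/2}V(U_{N/2+i}\mid\bU^{N/2+i-1},\bY)\le\sum_{i=1}^{N/2}V(Z_i\mid W_i,\bY_i)$; verifying this conditioning identity, together with the two independence claims above, is the crux of the argument, and everything else is routine.

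Finally I would invoke the base case (Theorem~\ref{theoremvarentropy}) for each of the $N/2$ order-two transforms, giving $V(W_i\mid\bY_i)+V(Z_i\mid W_i,\bY_i)\le V(X_i\mid Y_i)+V(X_{N/2+i}\mid Y_{N/2+i})$. Adding the two inductive-hypothesis bounds and summing these $N/2$ base-case bounds recombines the right-hand sides into $\sum_{i=1}^N V(X_i\mid Y_i)$, which establishes \eqref{varentropyinequality}. As a consistency check on the whole scheme, note that the chain rule forces $\sum_{i=1}^N h(U_i\mid\bU^{i-1},\bY)=\sum_{i=1}^N h(X_i\mid Y_i)$ as random variables (because $\bG_N$ is invertible and the inputs are independent), so \eqref{varentropyinequality} is equivalent to the statement that the sum of all pairwise covariances among the output entropy random variables is nonnegative, exactly paralleling the order-two form in Theorem~\ref{theoremvarentropy2}.
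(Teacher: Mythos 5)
Your proof is correct and follows essentially the same route as the paper: induction on $n$ via the decomposition of $\bG_N$ into $N/2$ parallel order-two transforms on the pairs $(i,\,N/2+i)$ followed by two order-$N/2$ transforms, applying the inductive hypothesis to the two halves and Theorem~\ref{theoremvarentropy} to the pairs, then summing the three families of inequalities. The only difference is presentational: the paper states this recursion abstractly in the space of CDFs of $\alpha$-parameters and omits its derivation as standard knowledge, whereas you derive it explicitly from the Kronecker factorization $\bG_N=\bF\otimes\bG_{N/2}$, including the conditioning identification (invertibility of $\bG_{N/2}$ makes conditioning on $(U_1,\dots,U_{N/2})$ equivalent to conditioning on $\bW$) that the paper's formulation sidesteps.
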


The next result considers the special case in which the BDEs at the input of the polar transform are i.i.d. and the transform size goes to infinity.

\begin{theorem}\label{varentropyasymptotic}
Let $(X_i,Y_i)$, $1\le i\le N$, be i.i.d. copies of a given BDE $(X,Y)$.
Consider the polar transform $ \bU = \bX\bG_N$ and let $(U_i;\bU^{i-1},\bY)$, $1\le i\le N$, be the BDEs at the output of the polar transform.
Then, the average varentropy at the output goes to zero asymptotically:
\begin{equation}\label{varentropyconvergence}
\frac1N \sum_{i=1}^N V(U_i|\bU^{i-1},\bY) \to 0, \quad \text{as $N \to \infty$.}
\end{equation}
\end{theorem}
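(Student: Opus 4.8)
The plan is to run the single-step contraction of Theorem~\ref{theoremvarentropy} along the recursive structure of the transform and to show that the per-step loss cannot vanish unless the varentropy itself vanishes. First I would pass to the standard channel-process description of $\bG_N$: writing the base BDE as $(X,Y)$, the $N=2^n$ output BDEs $(U_i;\bU^{i-1},\bY)$ are, up to indexing, exactly the BDEs $W^{s}$ obtained by applying the order-two transform recursively along the $2^n$ sign patterns $s\in\{-,+\}^n$, where $W^{s-}$ and $W^{s+}$ are the two outputs of the polar transform fed with two independent copies of $W^{s}$. Letting $B_1,B_2,\ldots$ be i.i.d.\ fair $\{-,+\}$ choices and $W_n\defn W^{B_1\cdots B_n}$, the left-hand side of \eqref{varentropyconvergence} is $V_n\defn\E\,V(W_n)$, and $\mu_n$, the law of the $\beta$-parameter of $W_n$, is a random element of $\mathcal{P}([0,1/2])$.

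Next I would extract the drift. Applying Theorem~\ref{theoremvarentropy2} (in the form \eqref{covarianceinequality}) to the two independent copies of $W_n$ that produce $W_n^-$ and $W_n^+$ gives $\tfrac12[V(W_n^-)+V(W_n^+)]=V(W_n)-c(W_n)$, where $c(W)\defn\cov(\hm,\hp)\ge 0$ is the covariance created by transforming two independent copies of $W$. Taking expectations yields the telescoping identity $V_n-V_{n+1}=\E\,c(W_n)\ge 0$, so $V_n$ is nonincreasing (this is Theorem~\ref{varentropygeneral} specialized to the i.i.d.\ case). Since $V_0=V(X|Y)\le 2.2434$ by Lemma~\ref{lemma:bound}, the sum $\sum_n\E\,c(W_n)=V_0-\lim_n V_n$ is finite, and therefore $\E\,c(W_n)\to 0$.

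The heart of the argument is to convert $\E\,c(W_n)\to0$ into $V_n\to0$. Both $V(\cdot)$ and $c(\cdot)$ are functionals of the $\beta$-law alone: $V(\mu)=\int\cH_2\,d\mu-(\int\cH\,d\mu)^2$, while $c(\mu)=\int f\,d(\mu\times\mu)+\cov[\Hm,\Hp]$ (Lemmas~\ref{cov1formula}, \ref{cov2formula}), both integrals of bounded continuous functions on $[0,1/2]$ and $[0,1/2]^2$; hence $\mu\mapsto V(\mu)$ and $\mu\mapsto c(\mu)$ are continuous for weak convergence. By the equality clause of Theorem~\ref{theoremvarentropy2} applied to two identical copies, $c(\mu)=0$ iff $\mu$ is extreme, which by Proposition~\ref{prop:extreme} is exactly the condition $V(\mu)=0$; thus $V$ and $c$ share the same zero set. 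Since $\mathcal{P}([0,1/2])$ is weakly compact (Prokhorov), for each $\eta>0$ the set $\{\mu:V(\mu)\ge\eta\}$ is compact and $c$ is strictly positive on it, so $c$ attains a positive minimum $\delta(\eta)>0$ there; equivalently $c(\mu)<\delta(\eta)\Rightarrow V(\mu)<\eta$. Combining this modulus with Markov's inequality and the uniform bound of Lemma~\ref{lemma:bound},
\begin{equation*}
V_n=\E\,V(W_n)\le \eta+2.2434\,\Pr\!\big(c(W_n)\ge\delta(\eta)\big)\le \eta+2.2434\,\frac{\E\,c(W_n)}{\delta(\eta)}\longrightarrow \eta,
\end{equation*}
and letting $\eta\downarrow0$ gives $V_n\to0$, which is \eqref{varentropyconvergence}.

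The step I expect to be the main obstacle is the conversion just described, and specifically the two analytic facts it rests on: the weak continuity of the covariance functional $c(\cdot)$ (which requires checking that $f$ and $\cH_2$ extend continuously to the boundary, where they involve indeterminate $0\cdot\infty$ products) and the faithful transfer of the equality clause of Theorem~\ref{theoremvarentropy2} to the diagonal of two identical inputs, so that $V$ and $c$ genuinely share a zero set. Once these are in hand the compactness argument is soft. An alternative that additionally yields a convergence rate would replace the compactness step by an explicit pointwise inequality $V(W)\le C\sqrt{c(W)}$; the erasing case $c=V^2$ of Lemma~\ref{lemma:BECcov2} and a check of the pure case near $b=1/2$ suggest this holds, after which $V_n^2\le C^2(V_n-V_{n+1})$ would force $V_n\to0$ geometrically, but proving such a bound uniformly over all BDEs is itself the hard part.
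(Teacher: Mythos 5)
Your proposal is correct, and its skeleton matches the paper's: the telescoping identity $V_{n+1}=V_n-\E\,c(W_n)$ (so $\sum_n\E\,c(W_n)\le V_0<\infty$ and $\E\,c(W_n)\to0$), then compactness plus continuity to obtain a uniform modulus $V(\mu)\ge\eta\Rightarrow c(\mu)\ge\delta(\eta)>0$, then an averaging step. The genuine difference is in the key continuity lemma. The paper works with $\alpha$-parameter CDFs on $[0,1]$, where the drop functional $C(F)=V(F)-[V(F^-)+V(F^+)]/2$ is defined \emph{through} the one-step transform; proving weak continuity of $C$ therefore forces the paper to show $F_k\Rightarrow F_0$ implies $F_k^{\pm}\Rightarrow F_0^{\pm}$, via the density-evolution formulas and the portmanteau theorem (Lemma~\ref{lemma:Continuous}, the bulk of the Appendix). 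You instead work with $\beta$-laws on $[0,1/2]$ and express the one-step covariance directly through the paper's own decomposition \eqref{eq:covB}, $c(\mu)=\int f\,\diff(\mu\times\mu)+\cov[\Hm(\bB),\Hp(\bB)]$ with $\bB\sim\mu\times\mu$, so continuity reduces to integrating fixed bounded continuous kernels against weakly convergent product measures --- no pushforward argument is needed. This is a real simplification, and it is legitimate even though the paper notes $\beta$-laws cannot be tracked through the transform: your argument never tracks them, since the telescoping identity comes from the chain rule, not from $\beta$-evolution. The boundary worry you flag does resolve: on $[0,1/2]^2$ one has $p\ast q\ge 1/4$, so the only indeterminate point of $f$ is $(0,0)$, where the bracketed entropy difference is merely bounded (it has no limit there) but is multiplied by factors of the form $x\log x\to 0$, so $f$ extends continuously by $0$; $\cH_2$, $\Hm$, $\Hp$ are continuous on the closed domain. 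The transfer of the equality clause of Theorem~\ref{theoremvarentropy2} to two identical inputs is immediate, since both inputs then have variance $V(\mu)$. Two minor remarks: your Markov-inequality endgame is equivalent in content to the paper's contradiction argument via the bad-index fraction $P_n(\delta)$, and your normalization $V_{n+1}-V_n=-\E\,c(W_n)$ is in fact the corrected form of the paper's \eqref{def:Dn}, which omits a factor $2^{-n}$.
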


\subsection{Proof of Theorem~\ref{varentropygeneral}}

We will first bring out the recursive nature of the polar transform by giving a more abstract formulation in terms of the $\alpha$-parameters of the
variables involved. 
Let us recall that a polar transform of order two is essentially a mapping of the form 
\begin{equation}\label{A2A}
(A_{\text{in},1},A_{\text{in},2})\to (A_{\text{out},1},A_{\text{out},2}),
\end{equation}
where $A_{\text{in},1}$ and $A_{\text{in},2}$ are the $\alpha$-parameters of the input BDEs $(X_1,Y_1)$ and $(X_2,Y_2)$,
and $A_{\text{out},1}$ and $A_{\text{out},2}$ are the $\alpha$-parameters of the output BDEs $(U_1,\bY)$ and $(U_2;U_1,\bY)$.

Alternatively, the polar transform may be viewed as an operation in the space of CDFs of $\alpha$-parameters
and represented in the form
\begin{equation}\label{F2F}
(F_{\text{out},1},F_{\text{out},2}) = \psi_2(F_{\text{in},1},F_{\text{in},2})
\end{equation}
where $F_{\text{in},i}$ and $F_{\text{out},i}$ are the CDFs of $A_{\text{in},i}$ and $A_{\text{out},i}$, respectively. 

Let $\cM$ be the space of all CDFs belonging to random variables defined on the interval $[0,1]$.
The CDF of any $\alpha$-parameter $A$ belongs to $\cM$, and conversely, each CDF $F\in \cM$ defines a valid $\alpha$-parameter $A$.
Thus, we may regard the polar transform of order two \eqref{F2F} as an operator of the form 
\begin{equation}\label{M2M}
\psi_2 : \cM^2 \to \cM^2.
\end{equation}
We will define higher order polar transforms following this viewpoint.

For each $i=1,\ldots,N$, let $A_{\text{in},i}$ denote the $\alpha$-parameter of the $i$th BDE $(X_i,Y_i)$ at the input, and let  
$F_{\text{in},i}$ denote the CDF of $A_{\text{in},i}$.
Likewise, let $A_{\text{out},i}$ denote the $\alpha$-parameter of the $i$th BDE $(U_i;\bU^{i-1},\bY)$
at the output, and let $F_{\text{out},i}$ be the CDF of $A_{\text{out},i}$.
Let $\bFi=(F_{\text{in},1},\ldots,F_{\text{in},N})$ and $\bFo=(F_{\text{out},1},\ldots,F_{\text{out},N})$.
We will represent a polar transform of order $N$ abstractly as $\bFo = \psi_N(\bFi)$.

There is a recursive formula that defines the polar transform of order $N$ in terms of the polar transform of order $N/2$.
Let us split the output $\bFo$ into two halves as $\bFo = (\bFo',\bFo'')$.
Each half is obtained by a size-$N/2$ transform of the form
\begin{equation*}
\bFo' = \psi_{N/2}(\bFi'), \quad \bFo'' = \psi_{N/2}(\bFi''),
\end{equation*}
in which $\bFi' =(F_{\text{in},1}',\ldots,F_{\text{in},N/2}')$, $\bFi'' =(F_{\text{in},1}'',\ldots,F_{\text{in},N/2}'')$ are obtained from $\bFi$ through
a series of size-2 transforms
\begin{equation}\label{recursionF}
(F_{\text{in},i}',F_{\text{in},i}'') = \psi_2(F_{\text{in},i},F_{\text{in},i+N/2}), \quad 1\le i \le N/2.
\end{equation}
The derivation of the above recursion from the algebraic definition \eqref{eq:polarN} is standard knowledge in polar coding, and will be omitted.

Let us write $V(F)$ to denote the varentropy $V(X|Y)$ of a BDE $(X,Y)$ whose $\alpha$-parameter has CDF $F$.
Using \eqref{eq:varentropyA}, we can write $V(F)$ as 
\begin{equation}
V(F) = \int_0^1 \cH_2(a) \diff F(a) - \biggl(\int_0^1 \cH(a) \diff F(a)\biggr)^2. \label{def:VF}
\end{equation}

We are now ready to prove Theorem~\ref{varentropygeneral}.
The proof will be by induction. First note that the claim \eqref{varentropyinequality} is true for $N=2$ by Theorem~\ref{theoremvarentropy}.
Let $N\ge 4$ and suppose, as induction hypothesis, that the claim is true for transforms of orders $N/2$ and smaller. We will show that the claim is true for order $N$.
By the induction hypothesis, we have 
\begin{equation}\label{eq:ineq1}
\sum_{i=1}^{N/2} V(F_{\text{out},i}') \le \sum_{i=1}^{N/2} V(F_{\text{in},i}')
\end{equation}
and
\begin{equation}\label{eq:ineq2}
\sum_{i=1}^{N/2} V(F_{\text{out},i}'') \le \sum_{i=1}^{N/2} V(F_{\text{in},i}'').
\end{equation}
Summing \eqref{eq:ineq1} and \eqref{eq:ineq2} side by side,
\begin{equation}\label{eq:ineq3}
\sum_{i=1}^{N} V(F_{\text{out},i}) \le \sum_{i=1}^{N/2}  \bigg[V(F_{\text{in},i}')+V(F_{\text{in},i}'')\bigg]
\end{equation}
Using the induction hypothesis again, we obtain
\begin{equation}\label{eq:ineq4}
V(F_{\text{in},i}') + V(F_{\text{in},i}'') \le V(F_{\text{in},i}) + V(F_{\text{in},i+N/2})
\end{equation}
for all $i=1,\ldots,N/2$.
The proof if completed by using \eqref{eq:ineq4} to upper-bound the right side of \eqref{eq:ineq3} further.

\subsection{Proof of Theorem~\ref{varentropyasymptotic}}
In this proof we will consider a sequence of polar transforms indexed by $n\ge 1$.
For a given $n$, the size of the transform is $N=2^n$; the inputs of the transform are $(X_i,Y_i)$, $1\le i\le N$, which are i.i.d. copies of a given BDE $(X,Y)$;
the outputs of the transform, which we will refer to as ``the $n$th generation BDEs'', are $(U_i;\bU^{i-1},\bY)$, $1\le i\le N$. Let $F_0$ denote the CDF of $(X,Y)$.
Let $F_{n,i}$ denote the CDF of $(U_i;\bU^{i-1},\bY)$,
the $i$th BDE in the $n$th generation, $n\ge 1$, $1\le i\le 2^n$, and set $F_{0,1}=F_0$.
In this notation, we can express the normalized varentropy compactly as
$$
\overline{V}_n \defn \frac{1}{2^n} \sum_{i=1}^{2^n} V(U_i|\bU^{i-1},\bY) = \frac{1}{2^n} \sum_{i=1}^{2^n} V(F_{n,i}), \quad n\ge 1,
$$ 
and $\overline{V}_0 \defn V(F_0)$.
The sequence $\{\overline{V}_n\}$ is non-negative (since each $\overline{V}_n$ is a sum of varentropies), and nonincreasing by Theorem~\ref{varentropygeneral}.
Thus $\{\overline{V}_n\}$ converges to a limit $c\ge 0$.
Our goal is to prove that $c = 0$.

The analysis in the proof of Theorem~\ref{varentropygeneral} covers the present case as a special instance.
In the present notation, the recursive relation \eqref{recursionF} takes the form
\begin{equation*}
(F_{n,i},F_{n,i+2^{n-1}}) = \psi_2(F_{n-1,i},F_{n-1,i}),\quad 1\le i\le 2^{n-1},
\end{equation*}
since here we have $F_{n-1,i}=F_{n-1,i+2^{n-1}}$ due to i.i.d. BDEs at the transform input.
Using this relation, we obtain readily an explicit formula for the incremental change in normalized varentropy from generation $n$ to $(n+1)$, namely,
\begin{equation}\label{def:Dn}
D_{n+1} \defn \overline{V}_{n+1} - \overline{V}_{n} = - \sum_{i=1}^{2^{n}} C(F_{n,i}), \quad n\ge 0,
\end{equation}
where
\begin{equation}\label{def:C}
C(F_{n,i}) \defn V(F_{n,i})-\big[V(F_{n+1,i}) + V(F_{n+1,i+2^{n}})\big]/2.
\end{equation}
If we denote the conditional entropy random variables in the polar transform as $\{h_{n,i}\}$,
it can be seen that 
$$
C(F_{n,i}) = \cov(h_{n+1,i},h_{n+1,i+2^n}).
$$
Thus, we have $C(F_{n,i})\ge 0$ by Theorem~\ref{theoremvarentropy2}, implying that $D_{n}\le 0$ for all $n\ge 1$.
It is useful to note here that
\begin{equation}\label{csum}
c \defn \lim_{n\to \infty} \overline{V}_n =  V(F_0) - \sum_{i=1}^\infty D_n,
\end{equation}
showing explicitly that $c$ is the limit of a monotone nonincreasing sequence of sums.

For $\delta \ge 0$, let 
\begin{equation}\label{def:Mdelta}
\cM_{\delta} \defn \{F\in \cM: V(F) \ge \delta\}.
\end{equation}
and
\begin{equation}\label{def:Delta}
\Delta(\delta) \defn \inf\{C(F): F\in \cM_\delta\}.
\end{equation}
As we will see in a moment, the main technical problem that remains is to show that 
\begin{equation}\label{infassumption}
\delta >0 \implies \Delta(\delta) >0.
\end{equation}
While this proposition seems plausible in view of the fact that $C(F)=0$ iff $V(F)=0$ (by Theorem~\ref{theoremvarentropy2}),
there is the technical question of whether the ``inf'' in \eqref{def:Delta} is achieved as a ``min'' by some $F\in \cM_{\delta}$.
We will first complete the proof of Theorem~\ref{varentropyasymptotic} by assuming that \eqref{infassumption} holds.
Then, we will give a proof of \eqref{infassumption} in the Appendix.

Let $J_n(\delta)\defn \{1\le i\le 2^n: F_{n,i}\in \cM_{\delta}\}$, and $P_n(\delta) \defn |J_n(\delta)|/2^n$.
For $\delta >0$, we may think of $J_n(\delta)$ as the set of ``bad'' BDEs in the $n$th generation and $P_n(\delta)$ as their fraction in the same population.
From \eqref{def:Dn}, we obtain the bound
\begin{equation}\label{DnBound}
D_n \le -P_n(\delta) \Delta(\delta), \quad \delta \ge 0.
\end{equation}
To apply this bound effectively, we need a lower bound on $P_n(\delta)$.
To derive such a lower bound, we observe that, for any $\delta \ge 0$,
\begin{equation}\label{eq:VnLB}
\overline{V}_n  \le [1-P_n(\delta)]\delta+ P_n(\delta)M  \le \delta + P_n(\delta)M
\end{equation}
where $M\defn 2.3434$ is the bound on varentropy provided by Lemma~\ref{lemma:bound}.
Let $n_0$ be such that for all $n\ge n_0$, $\overline{V}_n \ge c/2$. 
Since $\{\overline{V}_n\}$ converges to $c\ge 0$, $n_0$ exists and is finite.
This, combined with \eqref{eq:VnLB}, implies the following bound on the fraction of bad indices.
\begin{equation}\label{PnBound}
P_n(\delta) \ge \frac{\overline{V}_n - \delta}{M}\ge \frac{c/2-\delta}{M}, \qquad n\ge n_0.  
\end{equation}
Using \eqref{PnBound} in \eqref{DnBound} with $\delta = c/4$ gives
\begin{equation}\label{DnBound2}
D_n \le - (c/4M)\cdot\Delta(c/4), \quad n\ge n_0.
\end{equation}
From \eqref{csum}, we see that having $c>0$ is incompatible with \eqref{DnBound2}.
This completes the proof that $c=0$ (subject to the assumption that \eqref{infassumption} holds, which is proved in the Appendix).

\section{Concluding remarks}
One of the implications of the convergence of average varentropy to zero is that the entropy random variables 
``concentrate'' around their means along almost all trajectories of the polar transform. 
This concentration phenomenon provides a theoretical basis for understanding why polar decoders are robust against quantization of likelihoood ratios \cite{hassani2012}.

Theorem~\ref{varentropyasymptotic} may be seen as an alternative version of the ``polarization'' results of \cite{ArikanIT2009}.
In \cite{ArikanIT2009}, the analysis was centered around the mutual information function and martingale methods were used to establish asymptotic results.
The present study is centered around the varentropy and uses weak convegence of probability distributions.
The use of weak convergence in such problems is not new; Richardson and Urbanke \cite[pp.~187-188]{MCT} used similar methods to deal with problems of convergence of functionals defined on the space of binary memoryless channels. 

We should mention that Alsan and Telatar \cite{AlsanTelatar} have given an elementary proof of polarization that avoids martingale theory, and instead, uses Mrs. Gerber's lemma \cite{Gerber}.
It appears possible to adopt the method of \cite{AlsanTelatar} to establish Theorem~\ref{varentropyasymptotic} without using weak convergence.

\appendix[Proof of \eqref{infassumption}]

\begin{lemma}\label{lemma:Compact}
The space $\cM$ of CDFs on $[0,1]$ is a compact metric space.
\end{lemma}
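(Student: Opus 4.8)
The plan is to equip $\cM$ with a metric that metrizes weak convergence of distribution functions, and then to prove compactness through sequential compactness, exploiting the fact that the underlying interval $[0,1]$ is itself compact so that no probability mass can escape to infinity.

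First I would fix the L\'evy metric on $\cM$: for $F,G\in\cM$, set
\[
d_L(F,G)\defn \inf\{\epsilon>0 : F(x-\epsilon)-\epsilon\le G(x)\le F(x+\epsilon)+\epsilon \text{ for all } x\in\R\}.
\]
Checking that $d_L$ is a genuine metric is routine: symmetry and the triangle inequality are standard, and $d_L(F,G)=0$ forces $F=G$ once one lets $\epsilon\downarrow 0$ and invokes right-continuity. It is a classical fact that $d_L(F_n,F)\to 0$ is equivalent to weak convergence, i.e.\ to $F_n(x)\to F(x)$ at every continuity point $x$ of $F$; I would simply cite this rather than reprove it.

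Since $(\cM,d_L)$ is a metric space, compactness is equivalent to sequential compactness, so it suffices to show that every sequence $\{F_n\}\subset\cM$ has a subsequence converging in $d_L$ to some element of $\cM$. Here I would invoke Helly's selection theorem: any sequence of distribution functions admits a subsequence $\{F_{n_k}\}$ converging, at every continuity point, to a nondecreasing right-continuous function $F$. It then remains only to verify that this limit lies in $\cM$, i.e.\ that $F$ is the CDF of a probability measure supported on $[0,1]$.

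This last verification is the crux, and it is exactly where the compactness of $[0,1]$ enters. Every $F_n\in\cM$ satisfies $F_n(x)=0$ for $x<0$ and $F_n(x)=1$ for $x\ge 1$, and these boundary conditions pass to the limit $F$; in particular the total mass $F(1)-F(0^-)=1$ is preserved, so $F$ is a bona fide probability CDF on $[0,1]$ and $F\in\cM$. Equivalently, the family of all distributions on the compact set $[0,1]$ is trivially tight, so Prokhorov's theorem yields relative compactness, and the closedness just established makes $\cM$ compact. The only genuine obstacle is this preservation of total mass: on an unbounded domain the Helly limit could be mass-deficient (a sub-probability distribution), and it is precisely the confinement of all supports to the fixed compact interval $[0,1]$ that rules this out.
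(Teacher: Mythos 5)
Your proof is correct, but it takes a genuinely different route from the paper. The paper disposes of the lemma in one stroke by citing a general theorem of Parthasarathy: for \emph{any} compact metric space $X$, the space $\cM(X)$ of Borel probability measures on $X$ is compact in the weak topology (and is itself a metric space); the lemma is the special case $X=[0,1]$. You instead give a hands-on, real-line argument: metrize $\cM$ with the L\'evy metric, reduce compactness to sequential compactness, extract a Helly subsequential limit, and then verify that the limit is a genuine probability CDF on $[0,1]$ because all mass is confined to a fixed compact interval. What your approach buys is self-containedness and transparency --- it exhibits an explicit metric (which, importantly, metrizes exactly the weak convergence used later in the paper's Lemma~\ref{lemma:Continuous}) and isolates precisely where compactness of $[0,1]$ enters, namely in ruling out the mass-deficient limits that Helly's theorem would otherwise permit on an unbounded domain. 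What the paper's citation buys is brevity and generality, at the cost of invoking heavier machinery as a black box. One small point to tighten in your write-up: Helly's theorem gives convergence only at continuity points of the limit $F$, so the assertion that the boundary conditions ``pass to the limit'' deserves the one-line argument that for $x<0$ (resp.\ $x>1$) one can sandwich $x$ against a nearby continuity point and use monotonicity, then invoke right-continuity at the endpoints; this is routine, but as you say it is the crux, so it should be spelled out rather than asserted.
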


\begin{proof}
This follows from a general result about probability measures on compact metric spaces.
Theorem~6.4 in \cite[p.~45]{Parthasarathy} states that, for any compact metric space $X$, the space $\cM(X)$ of all probability measures defined on the $\sigma$-algebra of Borel sets in $X$ is compact.
Our definition of $\cM$ above coincides with the $\cM(X)$ with $X=[0,1]$.  
\end{proof}

For $F\in \cM$, let $F^-$ and $F^+$ be defined by (see \eqref{M2M})
$$
(F^-,F^+) =\psi_2(F,F).
$$
Define $C:\cM \to \R$ as the mapping 
\begin{equation}\label{def:Cbis}
C(F) \defn V(F)-\big[V(F^-) + V(F^+)\big]/2.
\end{equation}
This definition is a repetition of \eqref{def:C} in a more convenient notation.
We have already seen the interpretation of $C(F)$ as a covariance and mentioned that $C(F)\ge 0$.
It is also clear that $C(F)$ is bounded: $C(F)\le V(F)\le M$, where $M=2.3434$.
Thus, we may restrict the range of $C$ and write it as a mapping $C:\cM\to [0,M]$.

\begin{lemma}\label{lemma:Continuous}
The mapping $C:\cM \to [0,M]$ is continuous (w.r.t. the weak topology on $\cM$ and the usual topology of Borel sets in $\R$).
\end{lemma}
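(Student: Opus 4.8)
The plan is to exhibit $C$ as a continuous combination of continuous maps on $\cM$. The two building blocks are the varentropy functional $V:\cM\to\R$ and the single-step density-evolution maps $F\mapsto F^-$ and $F\mapsto F^+$ viewed as maps $\cM\to\cM$. Once both are shown to be continuous for the weak topology, continuity of $C(F)=V(F)-\tfrac12[V(F^-)+V(F^+)]$ follows by composition together with the continuity of addition and scalar multiplication on $\R$. The continuity of $V$ is the easy half: from \eqref{def:VF} we have $V(F)=\int_0^1 \cH_2(a)\diff F(a)-\bigl(\int_0^1 \cH(a)\diff F(a)\bigr)^2$, and both $\cH$ and $\cH_2$ extend to bounded continuous functions on the closed interval $[0,1]$ (the apparent singularities at $0$ and $1$ are removable since $a\log a\to 0$ and $a\log^2 a\to 0$). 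Hence, by the definition of the weak topology, $F\mapsto\int_0^1 \cH\diff F$ and $F\mapsto\int_0^1 \cH_2\diff F$ are continuous on $\cM$, and $V$ is obtained from these by continuous operations on $\R$.

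The substance of the argument is the continuity of the density-evolution maps. Here I would use the probabilistic description of the transform on the diagonal: if $A_1,A_2$ are i.i.d.\ with CDF $F$, then by \eqref{eq:rel1} the law $F^-$ is that of $A_1\ast A_2$, while by \eqref{eq:rel2} the law $F^+$ is that of the output $\alpha$-parameter. Testing weak convergence against an arbitrary bounded continuous $\phi\in C([0,1])$, I would write
$$
\int_0^1\phi\,\diff F^- = \iint_{[0,1]^2}\Phi^-(a_1,a_2)\,\diff F(a_1)\,\diff F(a_2),\qquad \Phi^-(a_1,a_2)\defn \phi(a_1\ast a_2),
$$
$$
\int_0^1\phi\,\diff F^+ = \iint_{[0,1]^2}\Phi^+(a_1,a_2)\,\diff F(a_1)\,\diff F(a_2),
$$
where $\Phi^+(a_1,a_2)\defn (a_1\ast a_2)\,\phi\bigl(a_1a_2/(a_1\ast a_2)\bigr)+(\overline a_1\ast a_2)\,\phi\bigl(\overline a_1 a_2/(\overline a_1\ast a_2)\bigr)$, the two summands being the $U_1=0$ and $U_1=1$ contributions weighted by their conditional probabilities. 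The kernel $\Phi^-$ is plainly continuous on the compact square $[0,1]^2$. For $\Phi^+$ I must examine the corners where a denominator vanishes, namely $a_1\ast a_2=0$ at $(0,1),(1,0)$ and $\overline a_1\ast a_2=0$ at $(0,0),(1,1)$; at each such corner only one summand degenerates, and it is bounded in modulus by $\|\phi\|_\infty$ times its vanishing prefactor, so it extends continuously by the value $0$. Thus $\Phi^+$ is a bounded continuous function on $[0,1]^2$.

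Finally I would invoke the standard fact that weak convergence is preserved under products: if $F_k\to F$ weakly on $[0,1]$, then $F_k\times F_k\to F\times F$ weakly on $[0,1]^2$; this reduces by Stone--Weierstrass to convergence against product kernels $f(a_1)g(a_2)$, where it is immediate since $\iint f(a_1)g(a_2)\,\diff(F_k\times F_k)=(\int f\diff F_k)(\int g\diff F_k)$. Integrating the bounded continuous kernels $\Phi^-$ and $\Phi^+$ against these measures then gives $\int\phi\,\diff F_k^-\to\int\phi\,\diff F^-$ and similarly for $F^+$; as $\phi$ was arbitrary, $F_k^-\to F^-$ and $F_k^+\to F^+$ weakly, so the two density-evolution maps are continuous. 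Composing with $V$ and combining as above yields the continuity of $C$. I expect the only genuine obstacle to be the verification that $\Phi^+$ extends continuously across the four degenerate corners, since the handling of the $0/0$ indeterminacy in \eqref{eq:rel2} is exactly what must be controlled; the remainder is soft topology built on the compactness of $\cM$ from Lemma~\ref{lemma:Compact}.
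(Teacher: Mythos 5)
Your proof is correct, and at the crucial step---continuity of the density-evolution maps $F\mapsto F^-$ and $F\mapsto F^+$---it takes a genuinely different route from the paper's. The paper argues via the portmanteau theorem: for an open set $G\subseteq[0,1]$ it rewrites $\int_G \diff F_n^{\pm}$ as integrals over the preimages $f_1^{-1}(G)$, $f_{21}^{-1}(G)$, $f_{22}^{-1}(G)$ of the evolution maps, invokes weak convergence of the product measures $F_n\times F_n$ \cite[Thm.~3.2]{Billingsley} and, for $F^+$, also invokes preservation of weak convergence under multiplication by the bounded continuous densities $(a_1\ast a_2)$ and $(\overline{a}_1\ast a_2)$ \cite[Thm.~5.1]{Billingsley}, arriving at the liminf inequalities \eqref{port1}--\eqref{port2}. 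You instead verify the test-function definition of weak convergence directly: you pull an arbitrary $\phi\in C([0,1])$ back through the transform into kernels $\Phi^{\pm}$ on $[0,1]^2$, folding the conditional probabilities of $U_1$ into $\Phi^+$, and reduce everything to product weak convergence tested against a single bounded continuous kernel. This buys two simplifications: the weighted-measure convergence step (\cite[Thm.~5.1]{Billingsley}) disappears, since the weights live inside the kernel; and the $0/0$ degeneracy of \eqref{eq:rel2} is handled once, transparently, by your observation that the degenerate summand of $\Phi^+$ is dominated by $\|\phi\|_\infty$ times its vanishing prefactor. By contrast, in the paper's (sketched) treatment of \eqref{port2} one must still note, and the paper leaves it implicit, that $f_{21}$ and $f_{22}$ are continuous only off the corner points, so that their preimages of open sets are open in $[0,1]^2$. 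What the paper's route buys is economy of infrastructure: it reuses the CDF-level formulas of Proposition~\ref{prop:densityevolution} verbatim and cites standard theorems, where you re-derive product convergence via Stone--Weierstrass (which is fine, and also standard). Two minor points to include in a final write-up: arguing with sequences is legitimate because the weak topology on $\cM$ is metrizable (consistent with Lemma~\ref{lemma:Compact}, which treats $\cM$ as a compact metric space); and your value-$0$ extension of $\Phi^+$ at the corners is consistent with the transfer formula $\int\phi\,\diff F^+ = \iint \Phi^+\,\diff F\,\diff F$ precisely because the weight vanishing there is the conditional probability of the corresponding value of $U_1$, which is the content of the Remark following Proposition~\ref{prop:alpha2alpha}.
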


\begin{proof}
We wish to show that if $F_n\Rightarrow F_0$ (in the sense of weak-convergence), then $|C(F_n)-C(F_0)|\to 0$.
We observe from \eqref{def:VF} that $V(F)$ is given in terms of expectations of two bounded uniformly continuous functions, $\cH:[0,1]\to [0,1]$ and $\cH_2:[0,1]\to [0,M]$.
Thus, by definition of weak convergence (\cite[p.~40]{Parthasarathy}), we have $|V(F_n)-V(F_0)|\to 0$.
In view of \eqref{def:Cbis}, the proof will be complete if we can show that $(F_n\Rightarrow F_0)$ implies $(F_n^-\Rightarrow F_0^-)$ and $(F_n^+\Rightarrow F_0^+)$,
where $F_n^- \defn (F_n)^-$, etc.
By the ``portmanteau'' theorem (see, {\sl e.g.}, Theorem~6.1 in \cite[p.~40]{Parthasarathy}), it is sufficient to show that
for every open set $G\subset [0,1]$,
\begin{gather}
\liminf_n \int_{G} \diff F_n^-(a) \ge \int_{G} \diff F_0^-(a),\label{port1}\\
\liminf_n \int_{G} \diff F_n^+(a) \ge \int_{G} \diff F_0^+(a).\label{port2}
\end{gather}

To prove \eqref{port1}, let $f_1:[0,1]^2 \to [0,1]$ be such that $f_1(a_1,a_2) = a_1\ast a_2$.
Then, we can write 
\begin{equation*}
P_n^-(G) \defn \int_{G} \diff F_n^-(a) = \iint\limits_{f_1^{-1}(G)} \diff F_n(a_1) \diff F_n(a_2),
\end{equation*}
which follows from the density evolution equation
\begin{align*}
F_n^-(a) & = \iint\limits_{a_1\ast a_2 \le a} \diff F_n(a_1) \diff F_n(a_2)
\end{align*}
that was proved as part of Proposition~\ref{prop:densityevolution}.
We note that (i) the pre-image $f_1(G)\subset [0,1]^2$ is an open set since the function $f$ is a continuous
and (ii) the product measure $F_n \times F_n$ converges weakly to $F_0\times F_0$ \cite[p.~21, Thm.~3.2]{Billingsley}; so, again by the portmanteau theorem,
\begin{equation*}
\liminf_n \iint\limits_{f_1^{-1}(G)} \diff F_n(a_1) \diff F_n(a_2) \ge \iint\limits_{f_1^{-1}(G)} \diff F_0(a_1) \diff F_0(a_2).
\end{equation*}
Since
\begin{equation*}
\iint\limits_{f_1^{-1}(G)} \diff F_0(a_1) \diff F_0(a_2) = \int_{G} \diff F_0^-(a),
\end{equation*}
the proof is complete.

The second condition \eqref{port2} can be proved in a similar manner.
We will sketch the steps of the proof but leave out the details.
The relevant form of the density evolution equation is now
\begin{align*}
F_n^+(a) & = \iint\limits_{(a_1a_2/a_1\ast a_2) \le a} (a_1\ast a_2) \diff F_n(a_1) \diff F_n(a_2)\\
& + \iint\limits_{(\overline{a}_1a_2/\overline{a}_1\ast a_2) \le a} (\overline{a}_1\ast a_2) \diff F_n(a_1) \diff F_n(a_2).
\end{align*}
We define $f_{21}(a_1,a_2) = a_1a_2/a_1\ast a_2$ and $f_{22}(a_1,a_2) = \overline{a}_1a_2/\overline{a}_1\ast a_2$,
and write
\begin{align*}
P_n^+(G) & \defn \int_{G} \diff F_n^+(a) 
 = \iint\limits_{f_{21}^{-1}(G)} (a_1\ast a_2) \diff F_n(a_1) \diff F_n(a_2) \\
& + \iint\limits_{f_{22}^{-1}(G)} (\overline{a}_1\ast a_2) \diff F_n(a_1) \diff F_n(a_2).
\end{align*}
Next, we note that, by a general result on the preservation of weak convergence \cite[Thm.~5.1]{Billingsley}, 
$$
(a_1\ast a_2) \diff F_n(a_1) \diff F_n(a_2) \Rightarrow (a_1\ast a_2) \diff F_0(a_1) \diff F_0(a_2),
$$
$$
(\overline{a}_1\ast a_2) \diff F_n(a_1) \diff F_n(a_2) \Rightarrow (\overline{a}_1\ast a_2) \diff F_0(a_1) \diff F_0(a_2).
$$
(The important point here is that the functions $(a_1\ast a_2)$ and $(\overline{a}_1\ast a_2)$ are uniformly continuous and bounded over the domain $(a_1,a_2)\in [0,1]^2$.
The claimed convergences follow readily from the definition of weak convergence.)
The proof is completed by writing
\begin{align*}
\liminf_n & P_n^+(G)  \ge  \iint\limits_{f_{21}^{-1}(G)}  (a_1\ast a_2) \diff F_0(a_1) \diff F_0(a_2)\\
 & + \iint\limits_{f_{22}^{-1}(G)}  (\overline{a}_1\ast a_2) \diff F_0(a_1) \diff F_0(a_2) = \int_{G} \diff F_0^+(a).
\end{align*}
\end{proof}

\begin{lemma}\label{lemma:Delta}
For $\delta>0$, $\Delta(\delta) >0$.
\end{lemma}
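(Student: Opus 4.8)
The plan is to turn the infimum in \eqref{def:Delta} into a genuine minimum by a compactness argument, and then to exclude the value zero using the already-established characterization of the zero set of $C$. Concretely, I would show that $\cM_\delta$ is compact, invoke the extreme value theorem to produce a minimizer $F^*\in\cM_\delta$ with $C(F^*)=\Delta(\delta)$, and finally argue $C(F^*)>0$ because $V(F^*)\ge\delta>0$ forces the associated covariance to be strictly positive via Theorem~\ref{theoremvarentropy2}.

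First I would establish that $\cM_\delta$ is compact. The key observation, already extracted inside the proof of Lemma~\ref{lemma:Continuous}, is that the functional $V:\cM\to[0,M]$ is itself continuous in the weak topology, since by \eqref{def:VF} it is assembled from expectations of the bounded uniformly continuous functions $\cH$ and $\cH_2$. Consequently $\cM_\delta=V^{-1}([\delta,M])$ is the preimage of a closed set under a continuous map, hence closed; being a closed subset of the compact space $\cM$ guaranteed by Lemma~\ref{lemma:Compact}, it is itself compact. I would then apply the extreme value theorem to the continuous map $C$ (continuity supplied by Lemma~\ref{lemma:Continuous}) restricted to this compact set: if $\cM_\delta=\emptyset$ the infimum is $+\infty$ and the claim is vacuous, so assuming $\cM_\delta\neq\emptyset$ there exists $F^*\in\cM_\delta$ attaining $C(F^*)=\Delta(\delta)$.

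The last step rules out $C(F^*)=0$. In the i.i.d.\ pairing defining $C$ through $(F^-,F^+)=\psi_2(F,F)$, both inputs of the size-two transform carry the CDF $F^*$, so $\var(\him)=\var(\hip)=V(F^*)$. Since $F^*\in\cM_\delta$ gives $V(F^*)\ge\delta>0$, neither input has vanishing varentropy; Theorem~\ref{theoremvarentropy2}, under the identification $C(F^*)=\cov(\hm,\hp)$, then yields strict positivity $C(F^*)>0$, whence $\Delta(\delta)>0$. The main obstacle is precisely the attainment of the infimum as a minimum, which is the technical point flagged after \eqref{infassumption}; but this is fully resolved once compactness of $\cM_\delta$ and continuity of $C$ are in hand, both of which are furnished by the preceding lemmas, so the remaining arguments are routine.
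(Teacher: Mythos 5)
Your proof is correct and follows essentially the same route as the paper: compactness of $\cM_\delta$ (closed subset of the compact space $\cM$ from Lemma~\ref{lemma:Compact}), attainment of the infimum via continuity of $C$ from Lemma~\ref{lemma:Continuous}, and strict positivity of $C$ at the minimizer via Theorem~\ref{theoremvarentropy2}. If anything, your write-up is slightly more careful than the paper's, which writes $\cM_\delta = C^{-1}([\delta,M])$ where it should be $V^{-1}([\delta,M])$; your identification of $\cM_\delta$ as the preimage under the continuous functional $V$, and your handling of the vacuous case $\cM_\delta=\emptyset$, are the right way to state it.
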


\begin{proof}
Fix $\delta >0$. The set $\cM_\delta$ can be written as the pre-image of a closed set under a continuous function:
$\cM_\delta = C^{-1}([\delta,M])$. Hence, by a general result about continuity (\cite[4.8]{Rudin}), $\cM_\delta$ is closed;
and, being a subset of the compact set $[0,1]$, it is compact (\cite[2.35]{Rudin}).
Since $C$ is continuous and $\cM_\delta$ is compact, the ``inf'' in \eqref{def:Delta} is achieved by some $F_0 \in \cM_\delta$ (\cite[4.16]{Rudin}):
$\Delta(\delta) = C(F_0)$.
Since $V(F_0)\ge \delta>0$, $F_0$ is not extreme, so by Theorem~\ref{theoremvarentropy2}, $C(F_0)>0$ . 
\end{proof}

\section*{Acknowledgment}
This work was supported in part by the European Commission in the framework of the FP7 Network of Excellence in
Wireless COMmunications NEWCOM\# (contract n.318306). 
Part of this work was done while the author was visiting the Simons Institute for the Theory of Computing, UC Berkeley.


\end{document}